\newmdtheoremenv{lemma}{Lemma}
\newmdtheoremenv{theorem}{Theorem}
\newmdtheoremenv{result_commutative}{Property (when $\hat{\bm{X}}_N$ is commutative)}
\newmdtheoremenv{result}{Property}
\DeclarePairedDelimiterX{\infdivx}[2]{(}{)}{  #1\;\delimsize\|\;#2}
\newcommand{\infdiv}{D\infdivx}
\newcommand{\appropto}{\mathrel{\vcenter{
  \offinterlineskip\halign{\hfil$##$\cr
    \propto\cr\noalign{\kern2pt}\sim\cr\noalign{\kern-2pt}}}}}
\newcommand{\Tr}{\mathrm{Tr}}
\newcommand{\s}{s_\mathrm{TD}}
\newcommand{\MCE}{MCE}
\newcommand{\CE}{CE}
\newcommand{\indexMCE}{\mathrm{mc}}
\newcommand{\indexCE}{\mathrm{c}}
\newcommand{\ind}[1]{
  \ensuremath{{#1}}}
\newcommand{\indind}[2]{
  \ensuremath{{#1,#2}}}
\begin{document}

\title{Statistical ensembles for phase coexistence states\\ specified by noncommutative additive observables}
\author{Yasushi Yoneta}
\email{yasushi.yoneta@riken.jp}
\affiliation{Department of Basic Science,
The University of Tokyo, 3-8-1 Komaba, Meguro, Tokyo 153-8902, Japan}
\affiliation{RIKEN Center for Quantum Computing,
2-1 Hirosawa, Wako City, Saitama 351-0198, Japan}
\author{Akira Shimizu}
\email{shmz@as.c.u-tokyo.ac.jp}
\affiliation{Institute for Photon Science and Technology,
The University of Tokyo, 7-3-1 Hongo, Bunkyo-ku, Tokyo 113-0033, Japan}
\affiliation{RIKEN Center for Quantum Computing,
2-1 Hirosawa, Wako City, Saitama 351-0198, Japan}
\date{\today}
\begin{abstract}
A phase coexistence state cannot be specified uniquely
by any intensive parameters, such as the temperature and the magnetic field,
because they take the same values over all coexisting phases.
It can be specified uniquely only by an appropriate set of additive observables.
Hence, to analyze phase coexistence states
the statistical ensembles that are specified by additive observables
have been employed,
such as the microcanonical and restricted ensembles.
However, such ensembles are ill-defined or ill-behaved
when some of the additive observables do not commute with each other.
Here, we solve this fundamental problem by extending a generalized ensemble
in such a way that it is applicable to phase coexistence states
which are specified by noncommutative additive observables.
We prove that this ensemble correctly gives
the density matrix corresponding to phase coexistence states of general quantum systems
as well as the thermodynamic functions.
Furthermore, these ensembles are convenient for practical calculations
because of their good analytic properties and useful formulas
by which temperature and other intensive parameters
are directly obtained from the expectation values of the additive observables.
As a demonstration, we apply our formulation to a two-dimensional system
whose phase coexistence states are specified
by an additive observable (order parameter)
that does not commute with the Hamiltonian.
\end{abstract}
\maketitle

\section{Introduction} \label{sec:introduction}
Phase coexistence states are widely observed in nature
and have long been attracting much attention.
For example, the interfaces separating the coexisting phases,
called `phase interfaces' or `domain walls',
play key roles in various fields of science,
such as surface physics, soft matter physics, chemistry, and biology
\cite{Landau1980,Doi2013,Atkins1998,Hyman2014}.
Phase coexistence states also attracts a lot of attention
from a viewpoint of engineering
because the phase interfaces can possess exotic properties
that are absent in the bulk,
such as multiferroicity, superconductivity, and semiconductivity \cite{Salje2009}.
For instance, there is a growing movement
to realize next-generation devices
utilizing such functional properties of phase interfaces
\cite{Parkin2008,Catalan2012,Catalan2014,Sharma2022}.

Recently, the use of quantum fluctuations has been proposed
to control the phase interface at low temperatures,
where thermal fluctuations are negligible,
and the experiment has been conducted \cite{Kagawa2016}.
However, it is a challenging task with current technology
to observe the microscopic structure of the phase interface
because the experiments are performed under extreme conditions.
Therefore, it is important to theoretically analyze quantum systems that exhibit phase coexistence.

However, phase coexistence states are particularly difficult to study theoretically
because they are related to first-order phase transitions
where the thermodynamic properties have particularly strong singularities.
In fact, there has been no concrete method
that can be systematically applied
to phase coexistence states of general quantum systems
for the following reason.

In statistical mechanics, statistical ensembles play a fundamental role
\cite{Gibbs1902,Landau1980,Toda1983,Callen1985},
and various types of ensembles have been devised
\cite{Hetherington1987,Challa1988_PRL,Challa1988_PRA,Johal2003,Ray1991,Gerling1993,Tsallis1988,Beck2003,Cohen2004,Costeniuc2005,Costeniuc2006,Toral2006,Penrose1971,Ellis2000,Touchette2010,Yoneta2019}.
The thermodynamic functions of the macroscopic systems with short-range interactions
are independent of the ensemble used in the calculation \cite{Ruelle1999}.
This fact is called the `equivalence of ensembles'.
Therefore, to obtain thermodynamic functions of the macroscopic system,
one can employ any ensemble.
However, to investigate other properties,
such as microscopic structures of the phase interfaces
or thermodynamics of the systems with finite-size effects,
one must choose an appropriate ensemble.

The appropriate choice of the ensemble is particularly important
in the first-order phase transition region,
where several phases can coexist in various proportions
\cite{Gibbs1876,Gibbs1878}.
The coexisting phases in a phase coexistence state
cannot be distinguished by any intensive parameters,
such as the temperature and the magnetic field,
because every intensive parameter
takes the same value between the coexisting phases.
For this reason, the canonical ensemble ({\CE})
that specified by intensive parameters
can give only particular states
among the various states in the first-order phase transition region.

For example,
when the periodic boundary conditions are imposed,
the {\CE} gives either a single-phase state
or a statistical mixture of single-phase states,
and it cannot give a phase coexistence state
(see Refs.~\cite{Gross2001,Yoneta2019};
we will also give examples in Section~\ref{sec:d2Ising}).
This problem of the {\CE} can be solved for some specific models
by imposing clever boundary conditions \cite{Dobrushin1973,vanBeijeren1975,Landau2014}.
However, for general systems such as systems with a disordered phase,
it is not clear whether such boundary conditions exist
and what boundary conditions should be imposed.

The coexisting phases in a phase coexistence state can be distinguished 
not by an intensive parameter but by a proper additive quantity,
called the `order parameter' (in a broad sense).
Therefore, the microcanonical ensemble ({\MCE})
can give phase coexistence states for general systems
because it is specified by a set of additive quantities
including the order parameter.
This makes it possible to investigate microscopic structures,
such as those of phase interfaces,
of the phase coexistence state \cite{Harris1985}.

The characteristics of different ensembles are also pronounced
when investigating finite systems
because the equivalence of ensembles does not hold for finite systems
\cite{Harris1985,Stump1987,Gross2001}.
For example, in finite systems which undergo first-order phase transitions
in the thermodynamic limit (TDL),
the concavity of the microcanonical entropy is broken generally,
even with short-range interactions
(see Appendix~C of Ref.~\cite{Yoneta2019}).
This concavity breaking leads to thermodynamic anomalies
such as the negative specific heat
\cite{Bixon1989,Labastie1990,Gross1990,Gross1997,Gross2001,Nielsen1994,Jellinek2000,Reyes-Nava2003},
which was indeed observed experimentally
\cite{DAgostino2000,Schmidt2001,Gobet2002}.
Nevertheless, 
the {\CE} fails to give such thermodynamic anomalies of finite systems, 
while the {\MCE} correctly gives them directly and quantitatively
\cite{Harris1985,Stump1987,Gross2001,Junghans2006,Junghans2008,Chen2009,Penrose1971,Binder1980,Troster2012}.

Unfortunately, however,
the {\MCE} has a fundamental problem in quantum systems
when the additive observables
that specify the {\MCE} do not commute with each other.
Since such observables cannot be diagonalized simultaneously, 
it is generally impossible to construct the {\MCE}
using their simultaneous eigenstates.
A similar problem is also present in the restricted ensemble (RE),
which is obtained by projecting the {\CE} onto the subspace spanned by the eigenstates of an additive observable
\cite{Forster1975,Goldenfeld1992,Penrose1971,Ellis2000}.
Such a projection often generates
a superposition of macroscopically distinct states \cite{Tatsuta2018},
which cannot be a thermal equilibrium state.
(See Section~\ref{sec:noncommutativity}
for these problems of the {\MCE} and the RE.)

To resolve the problem of noncommutative additive observables,
von Neumann proposed constructing the {\MCE}
using a commutative set of observables
that approximate the original set of additive observables \cite{Neumann1929}.
However, such a set of commutative observables are hard to construct,
and their explicit forms are obtained
only in limited cases \cite{Davidson1985,Lin1997,Ogata2013,Hastings2009}.
That is, so far, no method has been established
for constructing the {\MCE} or any other ensembles for the  noncommutative additive observables for general quantum systems.

In this paper, we solve this fundamental problem by extending a generalized ensemble
in such a way that it can be applied to phase coexistence states
which are specified by noncommutative additive observables.
Among various generalized ensembles,
we focus on the squeezed ensemble (SE) \cite{Yoneta2019},
which is a particularly broad class of statistical ensembles.
Originally, the SE was introduced for systems
whose equilibrium states can be specified uniquely
only by the internal energy and the number of spins.
We extend the SE so that it can be defined for systems
whose equilibrium states are specified
by three or more additive observables which can be noncommutative.

For this ensemble, we prove that
all the additive observables have macroscopically definite values
even when they do not commute with each other.
Therefore, by using the SE,
one can correctly obtain the desired phase coexistence states
of general quantum systems without any ad hoc procedures,
such as devising boundary conditions.

In addition, we derive useful formulas for thermodynamic quantities,
such as the thermodynamic entropy and the intensive parameters.
The intensive parameters can be calculated efficiently by the formulas,
without resorting to numerical differentiation of the thermodynamic function.
Therefore, by using the SE,
one can fully investigate the thermodynamic properties of the equilibrium states.

This paper is organized as follows.
In Section~\ref{sec:setup-notation},
we describe the setup and notation.
In Section~\ref{sec:noncommutativity},
we review the difficulties faced by conventional ensembles
which have been employed to study phase coexistence states,
due to the noncommutativity of additive observables.
In Section~\ref{sec:SE},
we define the SE for quantum systems
whose equilibrium states are specified by noncommutative additive observables.
In Sections~\ref{sec:SE_commutative} and \ref{sec:SE_noncommutative},
we investigate the basic properties of the SE.
We show that the SE always gives the correct equilibrium state.
In addition, we derive formulas for the thermodynamic entropy density and intensive parameters.
First, in Section~\ref{sec:SE_commutative},
we consider the case where the additive observables commute with each other.
Next, in Section~\ref{sec:SE_noncommutative},
we consider the noncommutative case.
In Section~\ref{sec:SE_parameter},
we discuss the parameter dependence of the SE.
In Section~\ref{sec:SE_usage},
we summarize the usage of the SE.
In Section~\ref{sec:freeSpin},
we apply our formulation to the free spins
and demonstrate its validity.
In Section~\ref{sec:d2Ising},
we apply our formulation to the two-dimensional transverse field Ising model,
which has coexisting phases distinguished by the order parameter
which does not commute with the Hamiltonian,
and demonstrate that the SE successfully gives phase coexistence states
even in such a case which cannot be treated by conventional ensembles.
In Section~\ref{sec:summary}, we provide our conclusions and discussion.

\section{Setup and notation} \label{sec:setup-notation}
In this section, we describe the setup and notation of our formulation.

\subsection{Setup}
We consider a sequence,
indexed by the number of lattice sites $N$,
of quantum spin systems on the hypercubic lattice
with short-range interactions.
We assume that the equilibrium state 
for each $N$ can be specified uniquely
by a set of $m$ additive quantities
$(X_{\ind{0}}, X_{\ind{1}},\cdots,X_{\ind{m-1}})$,
where 
\begin{align}
  X_{\ind{0}} = U
\end{align}
is the internal energy.
Let $(\hat{X}_{\indind{N}{0}},\hat{X}_{\indind{N}{1}},\cdots,\hat{X}_{\indind{N}{m-1}})$ be a set of the corresponding additive observables,
where 
\begin{align}
  \hat{X}_{\indind{N}{0}} = \hat{H}_N 
\end{align}
is the Hamiltonian.
Here, we do not include the interactions with external fields 
in the {\em internal} energy $U$ or the corresponding observable $\hat{H}_N$,
as in the case of the {\MCE}~\footnote{
For the case of the {\MCE},
the interactions with external fields are introduced into the Hamiltonian
through the Legendre-Fenchel transformation when going to the {\CE}.
},
because we choose additive quantities $(X_{\ind{0}},X_{\ind{1}},\cdots,X_{\ind{m-1}})$,
rather than the temperature and external fields, 
as independent thermodynamic parameters.
Hereafter, we assume that all quantities are nondimensionalized
using an appropriate scale.

To take the thermodynamic limit,
we introduce an additive quantity per site 
\begin{align}
  x_{\ind{i}} \equiv X_{\ind{i}}/N
  \qquad (i=0, 1, \cdots, m-1).
\end{align}
It gives the average density in a phase coexisting state, while it is just the density in a homogeneous state, of the additive quantity.
For simplicity, throughout this paper,
we use the term ``additive quantity density''
even when two or more phases coexist.
Particularly, $u \equiv U/N$ is the energy density.
The corresponding observable is given by
\begin{align}
  \hat{x}_{\indind{N}{i}} \equiv \hat{X}_{\indind{N}{i}}/N
  \qquad (i=0, 1, \cdots, m-1),
\end{align}
and we call it ``additive observable density''.
Particularly, we write $\hat{h}_N \equiv \hat{H}_N/N$.

For simplicity of notation,
sets of $m$ physical quantities are denoted by bold symbols like 
\begin{align}
  \bm{x} &= (x_{\ind{0}},x_{\ind{1}},\cdots,x_{\ind{m-1}}),\\
  \hat{\bm{x}} &= (\hat{x}_{\ind{0}},\hat{x}_{\ind{1}},\cdots,\hat{x}_{\ind{m-1}}).
\end{align}
We also define the product of sets of physical quantities,
$\bm{x}$ and $\bm{y}$, as
\begin{align}
  \bm{x}\cdot\bm{y} \equiv \sum_i x_{\ind{i}}y_{\ind{i}}.
\end{align}

The thermodynamic state space $\Omega$ 
is an open subset of $\mathbb{R}^m$, spanned by $\bm{x}$.

\subsection{Thermodynamic entropy density and genuine thermodynamic quantities}
Let $\s$ be the thermodynamic entropy density (in the thermodynamic limit).
Since the thermodynamic entropy is defined purely thermodynamically,
$\s$ exists uniquely regardless of microscopic details of the system
such as the noncommutativity of $\hat{\bm{X}}_N$.

According to thermodynamics,
$\s$ is a function of $\bm{x}$, $\s=\s(\bm{x})$, which is concave in $\Omega$.
Note that $\s$ is {\em strictly} concave
except in first-order phase transition regions.
On the other hand, in first-order phase transition regions
$\s$ becomes linear in a certain direction(s)
and therefore {\em not} strictly concave \cite{Shimizu2021_1_eng,Callen1985}.
This is the reason why the {\CE} cannot give a phase coexistence state \cite{Yoneta2019}.

In addition, $\s$ is continuously differentiable everywhere in $\Omega$
\cite{Callen1985,Lieb1999,Shimizu2021_1_eng}.
We furthermore employ the standard assumption 
for systems with short-range interactions that 
$\s$ is also twice continuously differentiable~\footnote{
The authors do not know systems with short-range interactions which violate this assumption.
}.

We call $\s$ and its derivatives the `genuine thermodynamic quantities'.
In particular, the first derivatives are the (entropic) intensive parameters \cite{Callen1985}.
We use $\Pi_{\ind{i}}$ to denote the intensive parameter conjugate to $X_{\ind{i}}$, i.e.,
\begin{align}
  \Pi_{\ind{i}}(\bm{x}) \equiv \frac{\partial \s}{\partial x_{\ind{i}}}(\bm{x}).\label{eq:def_Pi}
\end{align}
Particularly, $\beta(\bm{x}) \equiv \Pi_{\ind{0}}(\bm{x})$ is the inverse temperature
of the equilibrium state specified by $\bm{x}$.

\subsection{Canonical ensemble} \label{sec:CE}
One way to obtain $\s$ from statistical mechanics is to use the canonical ensemble ({\CE}),
which is defined by
\begin{align}
  \hat{\rho}_N^\indexCE(\bm{\pi})
  &\equiv \frac{\displaystyle e^{-N\bm{\pi}\cdot\hat{\bm{x}}_N}}{\displaystyle \Tr \left[e^{-N\bm{\pi}\cdot\hat{\bm{x}}_N}\right]}, \label{eq:CE-rho_def}\\
  \psi_N^\indexCE(\bm{\pi})
  &\equiv - \frac{1}{N} \log \Tr \left[e^{-N\bm{\pi}\cdot\hat{\bm{x}}_N}\right], \label{eq:CE-psi_def}
\end{align}
where
\begin{align}
  \bm{\pi}=(\pi_{\ind{0}},\pi_{\ind{1}},\cdots,\pi_{\ind{m-1}}) \in \mathbb{R}^m
\end{align}
is the parameter of the {\CE}
and is equal to the set of intensive parameters $\bm{\Pi}$
defined by Eq.~\eqref{eq:def_Pi}.
In the thermodynamics limit, the thermodynamic function
\begin{align}
  \psi^\indexCE(\bm{\pi}) \equiv \lim_{N\to\infty} \psi_N^\indexCE(\bm{\pi})
\end{align}
given by the {\CE} agrees with the Massieu function.
Therefore, one can obtain the thermodynamic entropy density $\s$
by Legendre-Fenchel transforming $\psi^\indexCE$ \cite{Costeniuc2005}:
\begin{align}
  \s(\bm{x}) = \inf_{\bm{\pi}} \{ \bm{\pi}\cdot\bm{x} - \psi^\indexCE(\bm{\pi}) \}. \label{eq:psi-s_CE}
\end{align}
That is, although the {\CE} fails to give the correct {\em equilibrium state}
at the first-order transition region,
it does give the correct {\em thermodynamic function}
everywhere in the thermodynamic state space
in the thermodynamics limit.

\section{Difficulties with noncommutative additive observables} \label{sec:noncommutativity}
In this section, we briefly review the difficulties
faced by conventional ensembles
which have been employed to analyze phase coexistence states,
due to the noncommutativity of additive observables.

As discussed in Section~\ref{sec:introduction},
to analyze phase coexistence states,
the {\MCE} and the RE, which are specified by the values of additive observables,
have conventionally been employed.
However, a straightforward generalization of such ensembles to the noncommutative case
is ill-defined, or lead to pathological behavior of the density matrix in general.

To illustrate this point let us take a simple example of free spins, for which
\begin{align}
  \hat{H}_N = \hat{X}_{\indind{N}{0}} &= \sum_{i=1}^N \hat{\sigma}_i^x, \label{eq:X0_free_spin}\\
  \hat{X}_{\indind{N}{1}} &= \sum_{i=1}^N \hat{\sigma}_i^y. \label{eq:X1_free_spin}
\end{align}
Obviously, $\hat{X}_{\indind{N}{0}}$ and $\hat{X}_{\indind{N}{1}}$ do not commute with each other.
This gives rise to the following difficulties,
and we cannot employ either the {\MCE} or the RE,
which are conventionally used
in the analysis of classical systems with first-order phase transitions.

First, we consider the {\MCE}.
If $\hat{x}_{\indind{N}{0}}$ and $\hat{x}_{\indind{N}{1}}$ were able to be diagonalized simultaneously,
the {\MCE} could be defined as an equally weighted mixture of their simultaneous eigenstates $\ket{x_{\ind{0}},x_{\ind{1}}}$
corresponding to eigenvalues $(x_{\ind{0}},x_{\ind{1}})$
in a given two-dimensional interval $I \subset \mathbb{R}^2$,
called the `shell',
and its density matrix could be given by
\begin{align}
  \hat{\rho}_N^\indexMCE \propto \sum_{(x_{\ind{0}},x_{\ind{1}}) \in I} \ket{x_{\ind{0}},x_{\ind{1}}}\bra{x_{\ind{0}},x_{\ind{1}}}.
\end{align}
However, since $\hat{x}_{\indind{N}{0}}$ and $\hat{x}_{\indind{N}{1}}$ do not commute,
the simultaneous eigenstates do not exist except when $(x_{\ind{0}},x_{\ind{1}})=(0,0)$
(which corresponds to the state with infinite temperature).
Therefore, the density matrix of the {\MCE} at finite temperature is ill-defined and impossible to construct
\footnote{
In Refs.~\cite{Truong1974,Kastner2010_PRL,Kastner2010_JStatMech},
it was proposed to extend the microcanonical entropy (not the density matrix) to quantum systems
whose equilibrium states are specified by two noncommutative additive observables.
}.

Next, we consider the RE.
Its density matrix has the following form \cite{Forster1975}
\begin{align}
  \hat{\rho}_N^\mathrm{r} \propto \hat{P}_N e^{-\pi_{\ind{0}}\hat{X}_{\indind{N}{0}}} \hat{P}_N,
\end{align}
where $\hat{P}_N$ is the projection operator onto the subspace
spanned by the eigenstates of $\hat{x}_{\indind{N}{1}}$
whose eigenvalues are in the range $[m^-,m^+]$.
Here we take $m^\pm$ to be $N$-independent constants
that satisfy $0<m_-<m_+<1$.
Then,
$\hat{x}_{\indind{N}{0}}$ has an anomalously large fluctuation in $\hat{\rho}_N^\mathrm{r}$.
In fact, as proved in Appendix~\ref{sec:derivation_freeSpin_RE},
the variance of $\hat{x}_{\indind{N}{0}}$ in $\hat{\rho}_N^\mathrm{r}$ is $\Theta(N^0)$~\footnote{
We write $f(N)=\Theta(g(N))$ when $f$ is bounded both above and below by $g$ up to a constant factor in the thermodynamic limit.
In addition, to emphasize that we are considering the asymptotic behavior as a function of $N$, we write $g(N)$ as $N^0$ when $g(N)=1$.
}:
\begin{align}
  \lim_{N\to\infty} \Tr \left[ \left( \hat{x}_{\indind{N}{0}} - x_{\indind{N}{0}}^\mathrm{r} \right)^2 \hat{\rho}_N^\mathrm{r} \right]
  \geq m^- (1-m^-) \tanh^2\pi_{\ind{0}}, \label{eq:freeSpin_RE_var}
\end{align}
where $x_{\indind{N}{0}}^\mathrm{r} \equiv \Tr \left[ \hat{x}_{\indind{N}{0}} \hat{\rho}_N^\mathrm{r} \right]$.
However, it should be $o(N^0)$
\footnote{
We write $f(N)=o(g(N))$ when $f$ is dominated by $g$ in the thermodynamic limit
such that $\displaystyle \lim_{N\to\infty} f(N)/g(N) = 0$.
}
if $\hat{\rho}_N^\mathrm{r}$ were a macroscopically definite state,
such as a thermal equilibrium state.
To make matters worse,
$\hat{\rho}_N^\mathrm{r}$ contains superpositions of macroscopically distinct states
with a significant magnitude \cite{Tatsuta2018}.
That is, the state given by the RE is quite far from thermal equilibrium in general.
As discussed in Ref.~\cite{Tatsuta2018},
the above argument holds similarly for interacting systems.

We solve these problems of the conventional ensembles by extending the SE to quantum systems 
whose equilibrium states are specified by a set of noncommutative additive observables.
Although it might look straightforward extension
of the SE that is specified only by the internal energy \cite{Yoneta2019},
its validity as a thermal ensemble is never obvious
as we have seen for the {\MCE} and RE and hence needs to be proved.

For this reason, we will clearly define the SE,  
derive useful formulas, and show their validity, 
for equilibrium states that are specified by a set of
noncommutative additive observables.

\section{Squeezed ensemble} \label{sec:SE}
Before describing detailed formulation,
we explain the basic idea and the definitions of the SE in this section.
We will prove various properties in the subsequent sections, 
whose main results are summarized
in Sections~\ref{sec:summary_commutative}, \ref{sec:summary_noncommutative} and \ref{sec:Summary_parameter}.
We will also summarize the usage of the SE in Section~\ref{sec:SE_usage}.

\subsection{Basic Idea}
When equilibrium states (for each $N$)
are specified only by the internal energy $U$,
the density matrix of the SE is defined as \cite{Yoneta2019}
\begin{align}
  \hat{\rho}_N^\eta (\kappa) \propto e^{-N \eta(\kappa;\hat{h}_N)}.
\end{align}
Here, $\eta(\kappa;\hat{h}_N)$ is the operator 
that is obtained by substituting the Hamiltonian density $\hat{h}_N$
for the internal energy density $u$
in a function $\eta(\kappa;u)$, 
which is parametrized by an ensemble parameter $\kappa$.
Although the parameter $\kappa$ is not a thermodynamic quantity,
it specifies an equilibrium state uniquely \cite{Yoneta2019}.
By imposing several conditions on $\eta(\kappa;u)$, 
such as the convexity, we proved in Ref.~\cite{Yoneta2019} that 
$\hat{\rho}_N^\eta (\kappa)$ always gives the correct equilibrium state
even in a first-order phase transition region.

To extend the SE to equilibrium states specified by noncommutative additive observables,
we extend $\eta(\kappa;u)$ to a function $\eta(\bm{\kappa};\bm{x})$ of a set of the additive quantity densities $\bm{x}$, 
parametrized by a set of ensemble parameters $\bm{\kappa}$.
The conditions on $\eta(\kappa;u)$ are also extended
to those on $\eta(\bm{\kappa};\bm{x})$.
The density matrix is obtained
by substituting the additive observable densities $\hat{\bm{x}}_N$ for $\bm{x}$,
\begin{align}
  \hat{\rho}_N^\eta(\bm{\kappa}) \propto e^{-N\eta(\bm{\kappa};\hat{\bm{x}}_N)}.
\end{align}
In this substitution,
the noncommutativity of $\hat{\bm{x}}_N$ should be treated appropriately.
One way to define an operator function from a general real function
is to use the spectral decomposition.
However, this method cannot be used in our case
because $\hat{\bm{x}}_N$ cannot be diagonalized simultaneously.
Hence, we take $\eta(\bm{\kappa};\bm{x})$ as a polynomial of $\bm{x}$.
(This can be generalized to a well-behaved power series of $\bm{x}$.)
In this case, $\eta(\hat{\bm{x}}_N)$ is given by the sum of products of $\hat{x}_{\indind{N}{i}}$,
and thus it does not require simultaneous eigenstates for its definition.
Since $\eta(\hat{\bm{x}}_N)$ can be different
depending on the order of the products of $\hat{x}_{\indind{N}{i}}$
that appear when calculating $\eta(\hat{\bm{x}}_N)$,
we let $\eta(\bm{\kappa};\bm{x})$ be a polynomial defined up to the order of the product.
(Such a polynomial is generally called a `noncommutative polynomial'.)
Although the order of the product needs to be specified to {\em define} the SE,
it is irrelevant to the {\em results} obtained from the SE 
in the thermodynamic limit (see Section~\ref{sec:SE_noncommutative}).

In addition, corresponding to the increase of the dimension of the thermodynamic state space $\Omega$,
we extend the ensemble parameter
from a single real number $\kappa$ to a set of real numbers
\begin{align}
  \bm{\kappa} = (\kappa_{\ind{0}},\kappa_{\ind{1}},\cdots,\kappa_{\ind{p-1}}).
\end{align}
It is natural to take the number $p$ of the parameters
to be the same as the dimension $m$ of the thermodynamic state space $\Omega$,
but it is not necessarily the same and can be greater than $m$:
\begin{align}
  m \leq p
\end{align}
We write $K$ $(\subset \mathbb{R}^p)$ for the space formed by the parameter $\bm{\kappa}$.

\subsection{Definition of the SE} \label{sec:SE_def}
Based on the above idea,
we introduce the SE as a general class of ensembles
which give equilibrium values of local observables, additive observables,
and the genuine thermodynamic quantities
even when equilibrium states are specified by noncommutative additive observables.

Let $\eta(\bm{\kappa};\bm{x})$ be a noncommutative polynomial
(or well-behaved power series~\footnote{
All arguments in Sections~\ref{sec:SE_commutative}-\ref{sec:SE_parameter}
are valid even if $\eta$ is a power series and not a polynomial,
as long as it defines an operator Lipschitz function in the operator norm.
})
with real coefficients in $m$ noncommutative variables $\bm{x}$,
parametrized by $p$ real numbers $\bm{\kappa}$ in $K$.
We define the {\em squeezed ensemble (SE) associated with $\eta$} by
\begin{align}
  \hat{\rho}_N^\eta(\bm{\kappa})
  &\equiv \frac{\displaystyle e^{-N\eta(\bm{\kappa};\hat{\bm{x}}_N)}}{\displaystyle \Tr\left[e^{-N\eta(\bm{\kappa};\hat{\bm{x}}_N)}\right]}, \label{eq:rho_N^eta}
\end{align}
and write the normalization constant as
\begin{align}
  \psi_N^\eta (\bm{\kappa})
  &\equiv - \frac{1}{N} \log \Tr \left[ e^{-N\eta(\bm{\kappa};\hat{\bm{x}}_N)} \right].
  \label{eq:psi_N^eta}
\end{align}
Note that simultaneous eigenstates of $\hat{\bm{X}}_N$
do not appear in the definition of the SE.
Therefore, the SE is well-defined
even in the case where $\hat{\bm{X}}_N$ cannot be diagonalized simultaneously.

Since $\eta$ is defined for noncommutative variables,
it is possible to substitute the commutative quantities
for the variables as a special case.
Therefore, the polynomial $\eta$ defines a real function on $\mathbb{R}^m$ uniquely.
When there is no danger of confusion,
we will use the same symbol $\eta$ for this function.
Using this notation,
we impose the following four conditions on $\eta$.
\begin{enumerate}[label={(\Alph*)}]
  \setcounter{enumi}{0}
  \item \label{cond:A}
    $\eta(\bm{\kappa};\hat{\bm{x}}_N)$ is self-adjoint for all $\bm{\kappa} \in K$ and $N$.
\end{enumerate}
\begin{enumerate}[label={(\Alph*)}]
  \setcounter{enumi}{1}
  \item \label{cond:B}
    $\s(\bm{x})-\eta(\bm{\kappa};\bm{x})$ has the unique maximum point
    $\bm{x}_\mathrm{max}^\eta$ in $\Omega$
    and is strongly concave
    in a neighborhood of $\bm{x}_\mathrm{max}^\eta$
    for all $\bm{\kappa} \in K$.
\end{enumerate}
\begin{enumerate}[label={(\Alph*)}]
  \setcounter{enumi}{2}
  \item \label{cond:C}
    $\eta(\bm{\kappa};\bm{x})$ is twice continuously differentiable
    as a $(p+m)$-variable function of $\bm{\kappa}$ and $\bm{x}$.
\end{enumerate}
\begin{enumerate}[label={(\Alph*)}]
  \setcounter{enumi}{3}
  \item \label{cond:D}
    $\bm{x}_\mathrm{max}^\eta(\bm{\kappa})$ is surjective onto $\Omega$. 
\end{enumerate}

Before going into details,
we briefly explain physical meanings of these conditions.
The condition~\ref{cond:A} ensures that $\hat{\rho}_N^\eta$ is a density matrix.
Since all additive observable densities $\hat{x}_{\indind{N}{i}}$ are self-adjoint,
this condition can be easily satisfied by employing an appropriately symmetrized polynomial as $\eta$.
The condition~\ref{cond:B} ensures that
all additive observables that specify the equilibrium state
have macroscopically definite value in the SE.
By contrast, as discussed in Section~\ref{sec:introduction},
at the first-order phase transition point,
the {\CE} gives a statistical mixture of macroscopically distinct states in many cases.
This condition makes the SEs free from such deficiency.
Since $\s$ is concave, this condition is always satisfied 
when $\eta$ is taken as a strongly convex function.
The condition~\ref{cond:C} ensures
that the equilibrium state described by the SE changes continuously
with respect to the parameter $\bm{\kappa}$.
The condition~\ref{cond:D} plays a crucial role for obtaining all thermodynamic properties
from the thermodynamic function given by the SE.

\section{Basic properties when \texorpdfstring{$\hat{\bm{X}}_N$}{XN} is commutative} \label{sec:SE_commutative}
In this and the next section,
we investigate the basic properties of the SE
when the parameter $\bm{\kappa}$ is fixed to
an arbitrary set of values,
and hence we simplify notation
by abbreviating $\eta(\bm{\kappa};\bm{x})$ as $\eta(\bm{x})$
except for the final results (which are displayed in frames).

We first discuss the simple case
where $\hat{\bm{x}}_N$ commute with each other
in this section.
The noncommutative case will be discussed in the next section.

\subsection{Basic properties of density matrix of the SE} \label{sec:densityMatrix}
We examine the probability distribution $p_N^\eta(\bm{x})$
of the additive quantity densities $\bm{x}$ in
$\hat{\rho}_N^\eta$.

Since we are considering the commutative case in this section,
$\hat{\bm{x}}_N$ can be diagonalized simultaneously,
and we can thereby define the density of microstates.
We divide the thermodynamic state space $\Omega$ (more precisely, $\mathbb{R}^m$)
into a direct sum of $m$-dimensional intervals $I_{\bm{n}}$
of side lengths of $o(N^0)$~\footnote{
More specifically, we define $m$-dimensional interval $I_{\bm{n}}$ as
$I_{\bm{n}} \equiv (n_{\ind{0}}\delta_{\ind{0}},(n_{\ind{0}}+1)\delta_{\ind{0}}] \times (n_{\ind{1}}\delta_{\ind{1}},(n_{\ind{1}}+1)\delta_{\ind{1}}] \times \cdots\times (n_{\ind{m-1}}\delta_{\ind{m-1}},(n_{\ind{m-1}}+1)\delta_{\ind{m-1}}]$,
where $\bm{n}=(n_{\ind{0}},n_{\ind{1}},\cdots,n_{\ind{m-1}})\in\mathbb{Z}^m$
is the labeling of the interval
and $\delta_i=o(N^0)$ is the side length of the interval.
Then this gives the direct sum decomposition of $\mathbb{R}^m$.
}
and write $g_N(\bm{x})$ for the density of microstates
averaged within the interval $I_{\bm{n}}$ that contains $\bm{x} \in \Omega$.
By Boltzmann's entropy formula,
the logarithm of $g_N$ approaches the thermodynamic entropy $\s$
in the thermodynamic limit (TDL):
\begin{align}
  \sigma_N(\bm{x})
  \equiv \frac{1}{N} \log g_N(\bm{x})
  \overset{\mathrm{TDL}}{\to} \s(\bm{x}).
  \label{eq:sigma_to_std}
\end{align}
From this relation,
for any smooth function $f(\bm{x})$ we have
\begin{align}
  \Tr \left[ f(\hat{\bm{x}}_N) e^{-N\eta(\hat{\bm{x}}_N)} \right]
  &= \int d\bm{x} g_N(\bm{x}) f(\bm{x}) e^{-N [\eta(\bm{x}) + o(N^0)]} \nonumber\\
  &= \int d\bm{x} f(\bm{x}) e^{N[\sigma_N(\bm{x}) - \eta(\bm{x}) + o(N^0)]} \nonumber\\
  &= \int d\bm{x} f(\bm{x}) e^{N[\s(\bm{x}) - \eta(\bm{x}) + o(N^0)]}.
  \label{eq:tr-int}
\end{align}
Therefore, we obtain
\begin{align}
  p_N^\eta(\bm{x})
  \propto e^{N[\s(\bm{x}) - \eta(\bm{x}) + o(N^0)]}. \label{eq:prob}
\end{align}
From condition~\ref{cond:B},
the function $\s(\bm{x}) - \eta(\bm{x})$ in the exponential
has the unique maximum at $\bm{x}_\mathrm{max}^\eta$, which satisfies
\begin{align}
  \frac{\partial \s}{\partial x_{\ind{i}}}(\bm{x}_\mathrm{max}^\eta)
  = \frac{\partial \eta}{\partial x_{\ind{i}}}(\bm{x}_\mathrm{max}^\eta). \label{eq:x_max^eta}
\end{align}
We investigate properties of $p_N^\eta(\bm{x})$
in the vicinity of this maximum point.
Expanding $\s(\bm{x})-\eta(\bm{x})$ around $\bm{x}_\mathrm{max}^\eta$, we get
\begin{align}
  &\s(\bm{x}) - \eta(\bm{x}) \nonumber\\
  &= \s(\bm{x}_\mathrm{max}^\eta) - \eta(\bm{x}_\mathrm{max}^\eta) \nonumber\\
  &- \frac{1}{2} \sum_{i,j} (x_{\ind{i}}-x_{\indind{\mathrm{max}}{i}}^\eta) (-H_{ij}) (x_{\ind{j}}-x_{\indind{\mathrm{max}}{j}}^\eta)
  + \cdots, \label{eq:taylor}
\end{align}
where $H$ is the Hesse matrix of $\s(\bm{x})-\eta(\bm{x})$ at $\bm{x}_\mathrm{max}^\eta$:
\begin{align}
  H_{ij}
  = \frac{\partial^2 \s}{\partial x_{\ind{i}} \partial x_{\ind{j}}}(\bm{x}_\mathrm{max}^\eta)
  - \frac{\partial^2 \eta}{\partial x_{\ind{i}} \partial x_{\ind{j}}}(\bm{x}_\mathrm{max}^\eta).
\end{align}
Again from condition~\ref{cond:B}, the matrix $H$ is negative definite.
Hence, $e^{N[\s(\bm{x}) - \eta(\bm{x})]}$
behaves as the Gaussian distribution
in the vicinity of $\bm{x}_\mathrm{max}^\eta$,
peaking at $\bm{x}_\mathrm{max}^\eta$,
with the covariance matrix $-N^{-1} H^{-1}$
\begin{align}
  \exp \left[
    - \frac{1}{2} \sum_{i,j} (x_{\ind{i}}-x_{\indind{\mathrm{max}}{i}}^\eta) (- N H_{ij}) (x_{\ind{j}}-x_{\indind{\mathrm{max}}{j}}^\eta)
  \right]. \label{eq:p_N^eta_Gaussian}
\end{align}
Although higher-order terms are dropped in this equation
(Laplace's approximation \cite{Butler2007}),
their contributions become negligible in the thermodynamic limit
because we have taken $\s(\bm{x})-\eta(\bm{x})$ strongly concave.
Taking also the $o(N^0)$ term in $p_N^\eta(\bm{x})$ into account,
we obtain
\begin{align}
  \Tr \left[ f(\hat{\bm{x}}_N) \hat{\rho}_N^\eta \right]
  = f(\bm{x}_\mathrm{max}^\eta) + o(N^0). \label{eq:tr_f-rho_x}
\end{align}

Letting $f(\bm{x})=x_{\ind{i}}$ in Eq.~\eqref{eq:tr_f-rho_x},
we obtain the expectation value of $\hat{\bm{x}}_N$ as
\begin{result_commutative} \label{property:x_N^eta_commutative}
\begin{align}
  x_{\indind{N}{i}}^\eta(\bm{\kappa}) \equiv \Tr \left[ \hat{x}_{\indind{N}{i}} \hat{\rho}_N^\eta(\bm{\kappa}) \right] \overset{\mathrm{TDL}}{\to} x_{\indind{\mathrm{max}}{i}}^\eta(\bm{\kappa}) \label{eq:x_N^eta_commutative}
\end{align}
\end{result_commutative}
Furthermore, letting $f(\bm{x})=(x_{\ind{i}})^2$ in Eq.~\eqref{eq:tr_f-rho_x}
and combining it with Eq.~\eqref{eq:x_N^eta_commutative},
we obtain the variance as
\begin{result_commutative} \label{property:var_x_N^eta_commutative}
\begin{align}
  \lim_{N\to\infty} \Tr \left[ \left( \hat{x}_{\indind{N}{i}} - x_{\indind{N}{i}}^\eta(\bm{\kappa}) \right)^2 \hat{\rho}_N^\eta(\bm{\kappa}) \right] = 0 \label{eq:var_x_N^eta_commutative}
\end{align}
\end{result_commutative}
That is, in the density matrix given by the SE,
all $\hat{x}_{\indind{N}{i}}$ have macroscopically definite values.
This is in contrast to the {\CE},
in which some $\hat{x}_{\indind{N}{i}}$ has macroscopically large fluctuation
at the first-order phase transition point.

Moreover, $\hat{\rho}_N^\eta$ can be characterized by the principle of equal probability.
That is, under the constraint of
\begin{align}
  \mathrm{Tr} \left[ \eta(\hat{\bm{x}}_N) \hat{\rho} \right]
  = \mathrm{Tr} \left[ \eta(\hat{\bm{x}}_N) \hat{\rho}_N^\eta \right],
\end{align}
we can show that the
density matrix $\hat{\rho}$ that maximizes the von Neumann entropy,
\begin{align}
  S^\mathrm{vN}(\hat{\rho}) \equiv - \mathrm{Tr} \left[ \hat{\rho} \log \hat{\rho} \right],
\end{align}
exists uniquely and is equal to $\hat{\rho}_N^\eta$,
from argument similar to that of Refs.~\cite{Jaynes1957_1,Jaynes1957_2,Jaynes1963}.
Note that this argument can also be applied to the case
where $\hat{X}_N$ do not commute with each other.

From these results, we conclude that
the density matrix $\hat{\rho}_N^\eta$ given by the SE
gives typical properties of the microstates
in which additive quantity densities $\bm{x}$ are $\bm{x}_\mathrm{max}^\eta$.
Therefore, by choosing $\eta$ such that $\bm{x}_\mathrm{max}^\eta$
coincides with $\bm{x}$ in the particular equilibrium state of interest,
one can obtain the quantum state corresponding to that equilibrium state.
In particular, by choosing $\eta$
such that $\bm{x}_\mathrm{max}^\eta$ lies
within the first-order phase transition region,
one can obtain a phase coexistence state
without finding and imposing clever boundary conditions
according to the phases to be coexisted
as in the case of using the {\CE}.
Thus, by using the SE,
one can investigate microscopic structures of phase coexisting states,
such as the phase interfaces.

\subsection{Genuine thermodynamic quantities} \label{sec:genuineTDQ}
We can also obtain genuine thermodynamic quantities,
such as the thermodynamic entropy density and the intensive parameter, easily from the SE.

First, we derive a formula for the thermodynamic entropy density.
We note that only $\bm{x}$ which is in close vicinity around the peak position $\bm{x}_\mathrm{max}^\eta$
contributes to the integral in the right hand side of Eq.~\eqref{eq:tr-int}.
Hence, letting $f=1$ and performing the Gaussian integral, we have
\begin{align}
  &\Tr \left[ e^{-N\eta(\hat{\bm{x}}_N)} \right] \nonumber\\
  &= \frac{(2\pi)^{m/2}}{N^{m/2} |H|^{1/2}} e^{N [\s(\bm{x}_\mathrm{max}^\eta)-\eta(\bm{x}_\mathrm{max}^\eta) + o(N^0)]} \left\{1+o(N^0)\right\}.
\end{align}
Here $|H|$ is the determinant of the Hesse matrix $H$.
Then we obtain
\begin{align}
  \psi_N^\eta
  &= - \frac{1}{N} \log \Tr \left[ e^{-N\eta(\hat{\bm{x}}_N)} \right] \nonumber\\
  &= \eta(\bm{x}_\mathrm{max}^\eta) - \s(\bm{x}_\mathrm{max}^\eta) + o(N^0).
\end{align}
Therefore, we obtain
\begin{result_commutative} \label{formula:entropy_commutative}
\begin{align}
  s_N^\eta(\bm{\kappa})
  &\equiv \eta(\bm{\kappa};\bm{x}_N^\eta(\bm{\kappa})) - \psi_N^\eta(\bm{\kappa})
  \overset{\mathrm{TDL}}{\to} \s(\bm{x}_\mathrm{max}^\eta(\bm{\kappa})) \label{eq:entropy-formula_commutative}
\end{align}
\end{result_commutative}
Since $\eta$ is a known function,
one can obtain the thermodynamic entropy density
from $\bm{x}_N^\eta$ and $\psi_N^\eta$ using this formula.

Next, we derive a formula for intensive parameters.
Using Eq.~\eqref{eq:x_max^eta} and in the same manner as above,
we also obtain
\begin{result_commutative} \label{formula:tdforce_commutative}
\begin{align}
  \Pi_{\indind{N}{i}}^\eta(\bm{\kappa})
  &\equiv \frac{\partial\eta}{\partial x_{\ind{i}}}(\bm{\kappa};\bm{x}_N^\eta)
  \overset{\mathrm{TDL}}{\to} \Pi_{\ind{i}}(\bm{x}_\mathrm{max}^\eta(\bm{\kappa})) \label{eq:tdforce-formula_commutative}
\end{align}
\end{result_commutative}
Since $\eta$ is a known function,
using this formula,
one can obtain the intensive parameters
just by calculating $\bm{x}_N^\eta$.
This is a great advantage of the SE over the {\MCE}.
Since we consider the commutative case in this section,
the {\MCE} is well defined (unlike the noncommutative case).
However,
one needs to differentiate the entropy
in order to calculate the intensive parameters from the {\MCE},
and it gives very noisy results in numerical calculation.
One can avoid such noisy calculation
by using formula Eq.~\eqref{eq:tdforce-formula_commutative} of the SE.

\subsection{Summary of properties when \texorpdfstring{$\hat{\bm{X}}_N$}{XN} is commutative}
\label{sec:summary_commutative}
We have discussed the case where $\hat{\bm{X}}_N$ is commutative in this section.
By using the SE, one can correctly obtain the desired phase coexistence states
without any ad hoc procedures, such as devising boundary conditions.
One can also obtain genuine thermodynamic quantities,
such as the thermodynamic entropy density and the intensive parameter, easily from the SE.
Therefore, one can investigate microscopic structures and thermodynamic properties of phase coexistence states.

In the next section, we will show that
all these properties are kept even in the noncommutative case.

\section{Basic properties when \texorpdfstring{$\hat{\bm{X}}_N$}{XN} is noncommutative}  \label{sec:SE_noncommutative}
We now study the noncommutative case.
As in the previous section,
we fix $\bm{\kappa}$ to an arbitrary set of values
and abbreviate $\eta(\bm{\kappa};\bm{x})$ as $\eta(\bm{x})$
except for the final results.

\subsection{Strategy for proofs and derivations} \label{sec:SE_noncommutative_strategy}
When $\hat{\bm{X}}_N$ do not commute with each other,
the density of microstates $g_N(\bm{x})$ is ill-defined.
Hence, we cannot apply the arguments,
such as Laplace's approximation,
of the previous section directly.

To overcome this obstacle,
we bring the idea proposed by von Neumann of using a commutative set of observables $\check{\bm{x}}_N$
that approximate the set of additive observable densities $\hat{\bm{x}}_N$.
Although such a set of commutative observables are hard to construct,
we use it only for the proofs and derivations of the basic properties of the SE,
assuming only the existence of $\check{\bm{X}}_N$.

After establishing the validity and formulas of the SE,
we can use the set of original additive observable densities $\hat{\bm{x}}_N$
when applying the SE to concrete calculations.
That is, we can calculate statistical-mechanical quantities
without using $\check{\bm{x}}_N$,
which would be practically impossible to construct.

\subsection{Commutative set of observables that approximates \texorpdfstring{$\hat{\bm{X}}_N$}{XN}} \label{sec:SE_noncommutative_hat-check}
Let us introduce the commutative set of observables $\check{\bm{x}}_N$.
Since we are considering a general quantum spin system,
its local Hilbert space on each site is taken as $\mathbb{C}^d$.
The spins are located on the $\nu$-dimensional hypercubic lattice,
denoted by $\Lambda_n=[-n,+n]^\nu \cap \mathbb{Z}^\nu$.
Then, an additive observable $\hat{X}_{\indind{N}{i}}$ is expressed as~\footnote{
Even when $\hat{X}_{\indind{N}{i}}$ contains a boundary terms,
there exists $\check{\bm{x}}_N$ that satisfies Eqs.~\eqref{eq:hat-check} and \eqref{eq:check},
because the boundary terms change the operator norm of $\hat{x}_{\indind{N}{i}}$ by only $O(N^{-1/\nu})$.
}
\begin{align}
  \hat{X}_{\indind{N}{i}} = \sum_{j\in\mathbb{Z}^\nu \text{\ s.t.\ }\gamma_j(I_{\ind{i}}) \subset \Lambda_n} \gamma_j(\hat{o}_{\ind{i}}), \label{eq:X_def}
\end{align}
where $\gamma_j$ is the $j$-lattice translation for $j\in\mathbb{Z}$
and $\hat{o}_{\ind{i}}$ is an $N$-independent local observable with support $I_{\ind{i}}$.
For this system, Ogata~\cite{Ogata2013} proved that
there exists a set of $m$ observables
\begin{align}
  \check{\bm{x}}_N = (\check{x}_{\indind{N}{0}},\check{x}_{\indind{N}{1}},\cdots,\check{x}_{\indind{N}{m-1}})
\end{align}
such that
\begin{align}
  \lim_{N\to\infty} \left\| \hat{x}_{\indind{N}{i}}-\check{x}_{\indind{N}{i}} \right\| &= 0
  \quad (i=0, 1, \cdots, m-1), \label{eq:hat-check}\\
  \left[ \check{x}_{\indind{N}{i}}, \check{x}_{\indind{N}{j}} \right] &= 0
  \quad (i,j=0, 1, \cdots, m-1). \label{eq:check}
\end{align}
To put it differently,
there exists a commutative set of observables $\check{\bm{x}}_N$
that approximates a noncommutative set of additive observable densities $\hat{\bm{x}}_N$
in the thermodynamic limit.

For notational simplicity,
we use the notation $\dot{\bullet}$
to represent the quantity $\hat{\bullet}$ associated with $\hat{\bm{x}}_N$
and the quantity $\check{\bullet}$ associated with $\check{\bm{x}}_N$ together.
Using this notation, we define two types of SEs,
one for $\hat{\bm{x}}_N$ and the other for $\check{\bm{x}}_N$, as
\begin{align}
  \rho_N^{\dot{\eta}}
  &\equiv \frac{\displaystyle
    e^{-N\eta(\dot{\bm{x}}_N)}
  }{\displaystyle
    \Tr \left[ e^{-N\eta(\dot{\bm{x}}_N)} \right]
  },\\
  \psi_N^{\dot{\eta}}
  &\equiv - \frac{1}{N} \log \Tr \left[ e^{-N\eta(\dot{\bm{x}}_N)} \right],
\end{align}
where $\dot{\bm{x}}_N = \hat{\bm{x}}_N$ or $\check{\bm{x}}_N$.
For the expectation value, we write
\begin{align}
  \braket{\bullet}_N^{\dot{\eta}}
  &\equiv \Tr \left[ \bullet \rho_N^{\dot{\eta}} \right].
\end{align}
We prove the following two theorems
in Appendices~\ref{sec:proof_theorem_psi} and \ref{sec:proof_theorem_exp}.

First, the thermodynamic function given by the SE for $\hat{\bm{x}}_N$
coincides with that given by the SE for $\check{\bm{x}}_N$:
\begin{theorem} \label{theorem:psi}
If $\psi_N^{\hat{\eta}}$ and $\psi_N^{\check{\eta}}$ converge in the thermodynamic limit, then
\begin{align}
  \lim_{N\to\infty} \psi_N^{\hat{\eta}}
  = \lim_{N\to\infty} \psi_N^{\check{\eta}}.
\end{align}
This holds even when condition~\ref{cond:B} is not satisfied.
\end{theorem}
In the proof we use only the fact that $\eta$ satisfies condition~\ref{cond:A}
and is Lipschitz continuous in the operator norm.
Since condition~\ref{cond:B} is unnecessary,
this theorem holds
even for $\eta=\bm{\pi}\cdot\bm{x}$,
which associates the {\CE} (see Section~\ref{sec:parameter_CE})
and does not satisfy condition~\ref{cond:B}
at the phase transition point.
This fact will be crucial
when deriving formulas for the genuine thermodynamic quantities
in Section~\ref{sec:SE_noncommutative_hat}.

Next, we consider the expectation values of observables.
Here, we focus on observables,
especially those that can be expressed as polynomials
(denoted as $\theta$)
of the observables $\dot{\bm{x}}$.
This class of observables includes not only $\dot{x}_{\indind{N}{i}}$,
but also their higher-order moments.
We obtain the following theorem:
\begin{theorem} \label{theorem:exp}
Let $\theta$ be a polynomial with real coefficients
in $m$ noncommutative variables.
We assume that $\theta$ is independent of $N$
and that $\theta(\hat{\bm{x}}_N)$ is self-adjoint for all $N$.

If the following conditions are fulfilled:

\begin{enumerate}[label={(\roman*)}]
  \item \label{cond:exp_converge}
    $\left\{\braket{\theta(\hat{\bm{x}}_N)}_N^{\hat{\eta}}\right\}_{N\in\mathbb{N}}$ and $\left\{\braket{\theta(\check{\bm{x}}_N)}_N^{\check{\eta}}\right\}_{N\in\mathbb{N}}$ converge in the thermodynamic limit,
  \item \label{cond:variance}
    There exists a closed and bounded interval $I$ containing $0$ and $M>0$
    such that, for any $\lambda \in I$ and $N\in\mathbb{N}$,
    the variance of $\theta(\check{\bm{x}}_N)$
    in the SE associated with $\eta+\lambda\theta$ for $\check{\bm{x}}_N$ satisfies
    \begin{align}
      \Tr \left[ \left( \theta(\check{\bm{x}}_N) - \braket{\theta(\check{\bm{x}}_N)}_N^{\check{\eta},\check{\theta}}(\lambda) \right)^2 \rho_N^{\check{\eta},\check{\theta}}(\lambda) \right]
      < \frac{M}{N}, \label{eq:variance_cond}
    \end{align}
    where
    \begin{align}
      \rho_N^{\dot{\eta},\dot{\theta}}(\lambda)
      &\equiv \frac{\displaystyle
        e^{-N\left(\eta(\dot{\bm{x}}_N)+\lambda\theta(\dot{\bm{x}}_N)\right)}
      }{\displaystyle
        \Tr \left[ e^{-N\left(\eta(\dot{\bm{x}}_N)+\lambda\theta(\dot{\bm{x}}_N)\right)} \right]
      },\\
      \braket{\bullet}_N^{\dot{\eta},\dot{\theta}}(\lambda)
      &\equiv \Tr \left[ \bullet \rho_N^{\dot{\eta},\dot{\theta}}(\lambda) \right],
    \end{align}
\end{enumerate}
then we have
\begin{align}
  \lim_{N\to\infty} \braket{\theta(\hat{\bm{x}}_N)}_N^{\hat{\eta}}
  = \lim_{N\to\infty} \braket{\theta(\check{\bm{x}}_N)}_N^{\check{\eta}}.
\end{align}
\end{theorem}
These two theorems show that
the statistical-mechanical properties of the SE are identical
for $\hat{\bm{x}}_N$ and for $\check{\bm{x}}_N$ in the thermodynamic limit.

Since condition~\ref{cond:variance} is on the SE
for the commutative observables $\check{\bm{x}}_N$,
it can be shown to hold using Laplace's approximation
under condition~\ref{cond:B} and assumptions listed in Section~\ref{sec:SE_noncommutative_assumption},
as we do in Appendix~\ref{sec:properties_proof}.

\subsection{Assumptions} \label{sec:SE_noncommutative_assumption}
Since $\check{\bm{x}}_N$ can be diagonalized simultaneously,
we can define the density of microstates
that have the specified value of $\check{\bm{x}}_N$.
We write $\check{g}_N$ for the density of microstates averaged
in the same way as we defined $g_N$ in Section~\ref{sec:densityMatrix}.
It is expected that $\check{g}_N(\bm{x})$ has the same properties as 
the density of microstates of the commutative case, $g_N(\bm{x})$.
We therefore make the following reasonable assumptions:
\begin{enumerate}  \setcounter{enumi}{0}
  \item \label{assumption:noncommutativity_concave}
    There exists a concave function $\varsigma(\bm{x})$ such that
    \begin{align}
      \varsigma_N(\bm{x}) 
      \equiv \frac{1}{N} \log \check{g}_N(\bm{x}) \overset{\mathrm{TDL}}{\to} \varsigma(\bm{x}).
    \end{align}
  \item \label{assumption:noncommutativity_Laplace}
    Laplace's approximation,
    which will be applied in Section~\ref{sec:SE_noncommutative_check}
    in the same way as done in Section~\ref{sec:SE_commutative},
    for the SE constructed from $\check{\bm{x}}_N$
    gives exact results in the thermodynamic limit.
\end{enumerate}

Let us discuss the validity of these assumptions.
As pointed out by von Neumann in Ref.~\cite{Neumann1929},
in macroscopic experiments,
it is possible to measure a set of macroscopic quantities simultaneously,
and the observables measured are not $\hat{\bm{x}}_N$
but a set of ``coarse-grained'' commutative observables.
The microcanonical entropy density
for such a set of actually measured macroscopic observables
should be $\varsigma_N$.
Therefore, it is expected that $\varsigma_N$
asymptotically approaches the thermodynamic entropy density $\s$,
which is concave, in the thermodynamic limit.
In fact, we will show 
in Section~\ref{sec:SE_noncommutative_check}
that $\varsigma=\s$ under assumptions \ref{assumption:noncommutativity_concave} and \ref{assumption:noncommutativity_Laplace}.
Thus assumption~\ref{assumption:noncommutativity_concave} is reasonable.

Assumption~\ref{assumption:noncommutativity_Laplace} also seems to hold
unless $\varsigma_N$ exhibits very pathological behavior.
Under assumption~\ref{assumption:noncommutativity_concave}, 
$\varsigma_N(\bm{x})-\eta(\bm{x})$ 
converges to $\varsigma(\bm{x})-\eta(\bm{x})$,
which is strongly concave because of condition~\ref{cond:B}.
Hence, it is expected that $\varsigma_N(\bm{x})-\eta(\bm{x})$ is also strongly concave,
at least in the very vicinity of the peak position
unless $\varsigma_N$ exhibits pathological behavior in the region
(such as having a point at which convergence to the thermodynamic limit is extremely slow).
This corresponds to the fact that 
$\sigma_N(\bm{x})-\eta(\bm{x})$ can be well approximated
by the strongly concave function $\s(\bm{x})-\eta(\bm{x})$
in the commutative case, discussed in Section~\ref{sec:densityMatrix}, 
in which Laplace's approximation gives exact results
in the thermodynamic limit.
Therefore, we expect that 
Laplace's approximation also works well in the case of $\varsigma_N(\bm{x})-\eta(\bm{x})$.

\subsection{Properties of the SE for commutative set \texorpdfstring{$\check{\bm{X}}_N$}{XN}} \label{sec:SE_noncommutative_check}
Following the strategy explained in Section~\ref{sec:SE_noncommutative_strategy}, 
we first study the basic properties of the SE for a commutative set $\check{\bm{x}}_N$.

In preparation, we first discuss the relation between $\varsigma$ and $\s$.
Using $\varsigma_N$,
the thermodynamic function given by the {\CE} for $\check{\bm{x}}_N$ can be expressed as
\begin{align}
  \psi_N^{\check{\indexCE}}(\bm{\pi})
  &\equiv - \frac{1}{N} \log \Tr \left[e^{-N\bm{\pi}\cdot\check{\bm{x}}_N}\right] \nonumber\\
  &= - \frac{1}{N} \log \int d\bm{x} e^{-N[\bm{\pi}\cdot\bm{x}-\varsigma_N(\bm{x})]} + o(N^0).
\end{align}
Since $e^{-N[\bm{\pi}\cdot\bm{x}-\varsigma_N(\bm{x})]}$ has its maximum value
at the point where $\bm{\pi}\cdot\bm{x} - \varsigma_N(\bm{x})$ is minimum
and decreases exponentially away from the point, we find
\begin{align}
  \psi_N^{\check{\indexCE}}(\bm{\pi})
  = \inf_{\bm{x} \in \Omega} \left\{ \bm{\pi}\cdot\bm{x} - \varsigma_N(\bm{x}) \right\} + o(N^0).
\end{align}
Therefore, in the thermodynamic limit,
by using Theorem~\ref{theorem:psi} for $\eta=\bm{\pi}\cdot\bm{x}$,
we obtain
\begin{align}
  \psi^{\indexCE}(\bm{\pi})
  &= \lim_{N\to\infty} \psi_N^{\hat{\indexCE}}(\bm{\pi}) \nonumber\\
  &= \lim_{N\to\infty} \psi_N^{\check{\indexCE}}(\bm{\pi}) \nonumber\\
  &= \inf_{\bm{x} \in \Omega} \left\{ \bm{\pi}\cdot\bm{x} - \varsigma(\bm{x}) \right\}.
\end{align}
Therefore, by Legendre-Fenchel transforming \cite{Costeniuc2005}
both sides and using Eq.~\eqref{eq:psi-s_CE},
we get
\begin{align}
  \varsigma(\bm{x}) = \s(\bm{x}). \label{eq:varsigma_sTD}
\end{align}

Now we examine the probability distribution $p_N^\eta(\bm{x})$ 
of the commutative set $\check{\bm{x}}_N$
in the density matrix $\rho_N^{\check{\eta}}$.
Using Eq.~\eqref{eq:varsigma_sTD},
for any smooth function $f(\bm{x})$, we have
\begin{align}
  \Tr \left[ f(\check{\bm{x}}_N) e^{-N\eta(\check{\bm{x}}_N)} \right]
  &= \int d\bm{x} f(\bm{x}) e^{N [\varsigma_N(\bm{x}) - \eta(\bm{x}) + o(N^0)]} \nonumber\\
  &= \int d\bm{x} f(\bm{x}) e^{N [\s(\bm{x}) - \eta(\bm{x}) + o(N^0)]}. \label{eq:tr-int_check}
\end{align}
This is just Eq.~\eqref{eq:tr-int}.
Therefore, the results for $p_N^\eta$
derived in Section~\ref{sec:SE_commutative} (Eqs.~\eqref{eq:prob}-\eqref{eq:p_N^eta_Gaussian})
hold for $p_N^\eta$ in the SE for $\check{\bm{x}}_N$ as well.
Then, evaluating the statistical-mechanical quantities in the SE for $\check{\bm{x}}_N$, we obtain
\begin{align}
  &\lim_{N\to\infty} \Braket{f(\check{\bm{x}}_N)}_N^{\check{\eta}}
  = f(\bm{x}_\mathrm{max}^\eta), \label{eq:tr_f-rho_check}\\
  &\lim_{N\to\infty} \psi_N^{\check{\eta}}
  = \eta(\bm{x}_\mathrm{max}^\eta) - \s(\bm{x}_\mathrm{max}^\eta), \label{eq:psi_check}
\end{align}
where $\bm{x}_\mathrm{max}^\eta$ is the unique maximum point
of $\s(\bm{x})-\eta(\bm{\kappa};\bm{x})$.

\subsection{Properties of the SE for noncommutative set \texorpdfstring{$\hat{\bm{X}}_N$}{XN}} \label{sec:SE_noncommutative_hat}
Using the theorems presented in Section~\ref{sec:SE_noncommutative_hat-check},
we derive properties of the SE
for the noncommutative set 
$\hat{\bm{x}}_N$ from those of the SE for the commutative set 
$\check{\bm{x}}_N$
in Appendix~\ref{sec:properties_proof}.
Here, we present the results and their implications.

First, we find that, as in the commutative case, 
the expectation value of $\hat{\bm{x}}_N$ approaches $\bm{x}_\mathrm{max}^\eta$:
\begin{result} \label{property:x_N^eta}
\begin{align}
  x_{\indind{N}{i}}^{\hat{\eta}}(\bm{\kappa})
  \equiv \Tr \left[ \hat{x}_{\indind{N}{i}} \rho_N^{\hat{\eta}}(\bm{\kappa}) \right]
  \overset{\mathrm{TDL}}{\to} x_{\indind{\mathrm{max}}{i}}^\eta(\bm{\kappa}) \label{eq:x_N^eta}
\end{align}
\end{result}
Furthermore, we obtain the variance as
\begin{result} \label{property:var_x_N^eta}
\begin{align}
  \lim_{N\to\infty} \Tr \left[ \left( \hat{x}_{\indind{N}{i}} - x_{\indind{N}{i}}^{\hat{\eta}}(\bm{\kappa}) \right)^2 \rho_N^{\hat{\eta}}(\bm{\kappa}) \right] = 0 \label{eq:var_x_N^eta}
\end{align}
\end{result}
That is, 
all $\hat{x}_{\indind{N}{i}}$ have macroscopically definite values 
in the SE even when they are noncommutative.
This is in contrast to the {\CE},
in which some $\hat{x}_{\indind{N}{i}}$ has macroscopically large fluctuation
at the first-order phase transition point.

We can also show that the density matrix of the SE can be characterized
by the principle of equal probability
that was explained in Section~\ref{sec:SE_commutative}
because the argument is also valid for the noncommutative case.

From these results, it is concluded that
the density matrix $\hat{\rho}_N^\eta$ given by the SE
gives typical properties of the microstates
with $\bm{x} = \bm{x}_\mathrm{max}^\eta(\bm{\kappa}) + o(N^0)$.
Therefore, it gives the equilibrium state
specified by $\bm{x}_\mathrm{max}^\eta(\bm{\kappa})$.
Moreover, we will show in the next section that 
$\bm{x}_\mathrm{max}^\eta(\bm{\kappa})$ changes continuously 
as a function of $\bm{\kappa}$.
Hence, by choosing $\bm{\kappa}$
such that $\bm{x}_\mathrm{max}^\eta(\bm{\kappa})$ lies within a first-order phase transition region,
one can obtain a phase coexistence state.
Thus, by using the SE,
one can investigate microscopic structures of phase coexisting states, such as the phase interfaces.

Furthermore, 
the thermodynamic entropy density is obtained by the following formula:
\begin{result} \label{formula:entropy}
\begin{align}
  s_N^{\hat{\eta}}(\bm{\kappa})
  &\equiv \eta(\bm{\kappa};\bm{x}_N^{\hat{\eta}}(\bm{\kappa})) - \psi_N^{\hat{\eta}}(\bm{\kappa})
  \overset{\mathrm{TDL}}{\to} \s(\bm{x}_\mathrm{max}^\eta(\bm{\kappa})). \label{eq:entropy-formula}
\end{align}
\end{result}
Since $\eta$ is a known function,
one can obtain the thermodynamic entropy density
easily from the expectation value $\bm{x}_N^\eta$ and $\psi_N^\eta$. 
Finally, the intensive parameters are obtained by
\begin{result} \label{formula:tdforce}
\begin{align}
  \Pi_{\indind{N}{i}}^{\hat{\eta}}(\bm{\kappa})
  &\equiv \frac{\partial\eta}{\partial x_{\ind{i}}}(\bm{\kappa};\bm{x}_N^{\hat{\eta}})
  \overset{\mathrm{TDL}}{\to} \Pi_{\ind{i}}(\bm{x}_\mathrm{max}^\eta(\bm{\kappa})) \label{eq:tdforce-formula}
\end{align}
\end{result}
Since $\eta$ is a known function,
one can obtain the intensive parameters
just by calculating the expectation value $\bm{x}_N^\eta$,
without differentiating thermodynamic functions.

\subsection{Summary of properties when \texorpdfstring{$\hat{\bm{X}}_N$}{XN} is noncommutative}
\label{sec:summary_noncommutative}
To summarize this section,
all the properties derived in Section~\ref{sec:SE_commutative}
for the case where a set of additive observables $\hat{\bm{X}}_N$ 
is commutative
hold similarly for the case where $\hat{\bm{X}}_N$ is noncommutative.
In addition, 
the density matrix of the SE can be characterized by the principle of equal probability.
Therefore, by using the SE, we can correctly obtain the desired phase coexistence states of general quantum systems without any ad hoc procedures, such as devising boundary conditions.
We can also obtain genuine thermodynamic quantities,
such as the thermodynamic entropy density and the intensive parameter, easily from the SE.
Therefore, one can investigate microscopic structures and thermodynamic properties of phase coexistence states of general quantum systems.

\section{Parameter of squeezed ensemble} \label{sec:SE_parameter}
So far, we have investigated the properties
when the parameter $\bm{\kappa}$ is fixed to 
an arbitrary set of values. We now discuss the $\bm{\kappa}$
dependencies of the statistical-mechanical quantities given by the SE,
under conditions~\ref{cond:A}-\ref{cond:D}.
We will show that the equilibrium state described by the SE changes continuously
with respect to the parameter $\bm{\kappa}$ (Section~\ref{sec:parameter_rho})
and that the thermodynamic function obtained from the SE is
related to the thermodynamic entropy $\s$
via a generalization of the Legendre-Fenchel transformation (Section~\ref{sec:parameter_psi}).
We will also discuss connection with conventional generalized ensembles (Section~\ref{sec:genCE}).

\subsection{{\CE}} \label{sec:parameter_CE}
Suppose that the system does not undergo a phase transition,
so that its thermodynamic entropy density $\s$ is strongly concave.
In this case, we can take $\eta$ as a linear function,
$\eta(\bm{\kappa};\bm{x}) = \bm{\kappa}\cdot\bm{x}$,
which reduces to the {\CE}. Therefore, the {\CE} can be regarded
as a special form of the SE when a phase transition is absent.

To distinguish the {\CE} from the general SEs,
we write $c$ and $\bm{\pi} = (\pi_{\ind{0}}, \cdots, \pi_{\ind{m-1}})$
for $\eta$ and $\bm{\kappa}$ of the {\CE}, respectively.
That is, $\eta$ for the {\CE} is denoted as
\begin{align}
  \indexCE(\bm{\pi};\bm{x}) &= \bm{\pi}\cdot\bm{x}. \label{eq:CE_c}
\end{align}
Then it is seen from Eq.~\eqref{eq:tdforce-formula} that 
\begin{align}
  \pi_{\ind{i}} = \Pi_{\ind{i}}(\bm{x}_\mathrm{max}^\indexCE(\bm{\pi})). \label{eq:tdforce-formula_CE}
\end{align}
This equation simply states that the intensive parameters $\bm{\Pi}$ 
in the equilibrium state specified by $\bm{\pi}$ are just $\bm{\pi}$, 
as they should be.

In this particular case, 
$\bm{\kappa}$ coincides with the thermodynamic quantities $\bm{\Pi}$.
For general SEs, however, $\bm{\kappa}$ is not directly related to the thermodynamic quantities.
Nevertheless, one can calculate thermodynamic quantities from the SEs
via Eqs.~\eqref{eq:s-psi} and \eqref{eq:psi-s} as shown below,
as well as via Eqs.~\eqref{eq:entropy-formula} and \eqref{eq:tdforce-formula}.

\subsection{Parameter dependence of the density matrix of the SE} \label{sec:parameter_rho}
We consider the parameter dependence of the equilibrium state described by the SE.
As shown in Sections~\ref{sec:SE_commutative} and \ref{sec:SE_noncommutative},
the equilibrium state described by the SE
is specified by $\bm{x}_\mathrm{max}^\eta$ in the thermodynamic limit.
Hence, we investigate the $\bm{\kappa}$ dependence of $\bm{x}_\mathrm{max}^\eta$,
i.e., the function $\bm{x}_\mathrm{max}^\eta(\bm{\kappa})$.

Since $\bm{x}_\mathrm{max}^\eta$
is the maximum point of the strongly-concave function  
$\s(\bm{x})-\eta(\bm{\kappa};\bm{x})$, 
it is uniquely determined by
\begin{align}
  F_{\ind{i}}(\bm{\kappa};\bm{x}_\mathrm{max}^\eta) = 0
  \qquad (i=0, 1, \cdots, m-1),
\end{align}
where
\begin{align}
  F_{\ind{i}}(\bm{\kappa};\bm{x}) = \frac{\partial \s}{\partial x_{\ind{i}}}(\bm{x})
  - \frac{\partial \eta}{\partial x_{\ind{i}}}(\bm{\kappa};\bm{x})
  \quad (i=0, 1, \cdots, m-1).
\end{align}
From condition~\ref{cond:C}, $F_{\ind{i}}$ is a continuously differentiable function of $\bm{\kappa}$ and $\bm{x}$.
Hence, applying an implicit function theorem \cite{Rudin1976}, we find
\begin{result} \label{property:continuity}
  $\bm{x}_\mathrm{max}^\eta$ changes continuously with respect to $\bm{\kappa}$.
\end{result}
Therefore, the equilibrium state described by the SE changes continuously 
as a function of $\bm{\kappa}$.

This should be contrasted with the {\CE}
at a first-order phase transition point.
In that case, since condition~\ref{cond:B} is not satisfied,
the {\CE} is not an SE and $\bm{x}_\mathrm{max}^\indexCE$ is not unique.
Consequently, as passing through the first-order phase transition point,
$\bm{x}_\mathrm{max}^\indexCE$ switches discontinuously.
This means that the equilibrium state described by the {\CE}
changes discontinuously as a function of $\bm{\pi}$.
Therefore, the {\CE} is unable to describe the equilibrium states
in the first-order phase transition region.
Property~\ref{property:continuity} implies that such a discontinuous change does not occur in the SE.

\subsection{Thermodynamic function} \label{sec:parameter_psi}
Let us define the {\em thermodynamic function associated with $\eta$}
as the thermodynamic limit of $\psi_N^\eta(\bm{\kappa})$:
\begin{align}
  \psi^\eta(\bm{\kappa}) \equiv \lim_{N \to \infty} \psi_N^\eta(\bm{\kappa}). \label{eq:psi_eta}
\end{align}
Then we can rephrase Eq.~\eqref{eq:entropy-formula} using Eq.~\eqref{eq:x_N^eta} as
\begin{align}
  \psi^\eta(\bm{\kappa}) &= \inf_{\bm{x} \in \Omega} \left\{ \eta(\bm{\kappa};\bm{x})-\s(\bm{x}) \right\}. \label{eq:s-psi}
\end{align}
We can also invert this relation
owing to condition~\ref{cond:D},
as proven in Appendix~\ref{sec:genLeg_proof}:
\begin{result} \label{property:psi-s}
For all $\bm{x} \in \Omega$, we have
\begin{align}
  \s(\bm{x}) &= \inf_{\bm{\kappa} \in K} \left\{ \eta(\bm{\kappa};\bm{x})-\psi^\eta(\bm{\kappa}) \right\} \label{eq:psi-s}
\end{align}
\end{result}
Using this relation,
one can obtain the thermodynamic entropy $\s(\bm{x})$ directly from $\psi^\eta(\bm{\kappa})$ 
without knowing the function $\bm{x}^\eta(\bm{\kappa})$.
In this sense, $\psi^\eta(\bm{\kappa})$ is equivalent to the thermodynamic entropy.
Therefore, all thermodynamic functions are obtained from $\psi^\eta(\bm{\kappa})$.

From a physical point of view,
relation~\eqref{eq:s-psi}-\eqref{eq:psi-s} is a generalization of the equivalence of the entropy and the canonical thermodynamic function.
From a mathematical point of view, the relation is a generalization of the Legendre-Fenchel transformation.

\subsection{Connection with the conventional generalized ensembles} \label{sec:genCE}

Here we mention the connection with the conventional generalized ensembles.

Consider the SE for the particular choice of $\eta$,
\begin{align}
  \eta(\bm{\kappa};\bm{x}) &= \bm{\kappa}\cdot\bm{x} + g(\bm{x}), \label{eq:genCE}
\end{align}
where $g$ is a continuous function that is fixed independently of $\bm{\kappa}$.
This $\eta$ defines the so-called generalized canonical ensemble introduced by M. Costeniuc~\textit{et~al.} \cite{Costeniuc2005,Costeniuc2006}.
It includes generalized ensembles introduced in earlier studies
such as the Gaussian ensemble, which is obtained by taking $g$ as
\begin{align}
  g(\bm{x}) = \gamma \sum_i {x_{\ind{i}}}^2,
\end{align}
where $\gamma$ is a fixed positive constant
independent of $\bm{\kappa}$.
In fact, this choice of $g$ gives the density matrix that can be written as
\begin{align}
  \hat{\rho}_N^\eta(\bm{\kappa})
  \propto \exp \left[- N \gamma \sum_i \left( \hat{x}_{\indind{N}{i}} + \frac{\kappa_{\ind{i}}}{2\gamma} \right)^2 \right],
\end{align}
which has the same form as the Gaussian ensemble
originally proposed by Hetherington \cite{Hetherington1987,Challa1988_PRL,Challa1988_PRA,Johal2003,Costeniuc2005}.

In the generalized canonical ensemble, Eq.~\eqref{eq:genCE}, 
the parameter $\kappa_{\ind{i}}$ couples with $x_{\ind{i}}$ linearly
in order for $\s(u)-g(u)$ to be
the Legendre-Fenchel transform of the thermodynamic function given by the ensemble.
Therefore, in quantum systems,
it is necessary to compute the matrix exponential
each time the parameters are changed.
In numerical calculations, the matrix exponential is usually evaluated
by utilizing the Suzuki-Trotter decomposition, but this method has drawbacks such as the Trotter error.

In comparison, 
the parameter dependence is more general in the SE.
Even when $\kappa_{\ind{i}}$ couples with $x_{\ind{i}}$ {\em nonlinearly}
the thermodynamic function given by the SE is equivalent to the thermodynamic entropy as shown in Section~\ref{sec:parameter_psi}.
This generality allows us to choose $\eta$
that is greatly convenient for practical use.

To illustrate this point let us take a simple example of
the case where the equilibrium state is specified
only by the internal energy (i.e., $m=1$).
We can take $\eta=-2\kappa_{\ind{0}}\log(l-x_{\ind{0}})$ and $K=(0,\infty)$,
where $l$ is an arbitrary constant 
such that $Nl \geq$ the largest eigenvalue of 
$\hat{H}_N$ \cite{Sugiura2012,Yoneta2019}.
This choice of $\eta$ is particularly convenient for quantum systems,
because a series of density matrices given by the SE
\begin{align}
  \hat{\rho}_N^\eta(\kappa_{\ind{0}}=k/N) \propto ( Nl - \hat{H}_N )^{2k}
  \qquad \left( k=0,1,2,\cdots \right)
\end{align}
can be constructed sequentially
by simply multiplying $(Nl-\hat{H}_N)$ repeatedly:
\begin{align}
  \hat{\rho}_N^\eta((k+1)/N) &\propto ( Nl - \hat{H}_N )^2 \times \hat{\rho}_N^\eta(k/N).
\end{align}
Therefore, when using this SE, the calculation of the matrix exponential,
which is unavoidable when using the generalized canonical ensemble,
is no longer necessary.
Although the SEs that can be constructed in this way
are limited to those at discrete points $\kappa_{\ind{0}} = k/N (k=0,1,2,\cdots)$,
these states are sufficiently dense for large $N$
in the thermodynamic state space $\Omega$.
Therefore, one can obtain all the thermodynamic properties of the system.
In fact, using Eq.~\eqref{eq:tdforce-formula}, we have
\footnote{We write $f(N)=O(g(N))$ when $|f|$ is bounded above by $g$ up to a constant factor $K$ in the thermodynamic limit such that $\displaystyle |f(N)| \leq K g(N)$ as $N \to \infty$.}
\begin{align}
  \frac{\partial x_{\indind{N}{0}}^\eta}{\partial \kappa_{\ind{0}}}
  = \frac{2}{\displaystyle (l-x_{\indind{N}{0}}^\eta)\frac{\partial \beta}{\partial x_{\ind{0}}}-\beta} + o(N^0)
  = O(N^0).
\end{align}
Therefore,
\begin{align}
  x_{\indind{N}{0}}^\eta((k+1)/N) - x_{\indind{N}{0}}^\eta(k/N) = O(N^{-1}),
\end{align}
which implies $x_{\indind{N}{0}}$ is dense in the thermodynamic limit.

\subsection{Summary of Parameter Dependence} \label{sec:Summary_parameter}
To summarize this section,
although the parameter $\bm{\kappa}$ of the SE
is not directly related to familiar thermodynamic quantities,
such as the intensive parameter,
it corresponds to a point in the thermodynamic state space,
and, by varying it, we can investigate the properties of a series of equilibrium states.
In particular, by simply computing the thermodynamic function given by the SE
as a function of $\bm{\kappa}$,
one can obtain all the thermodynamic properties of the system.
As compared with the conventional generalized ensembles,
it is easier to compute statistical-mechanical quantities while varying 
$\bm{\kappa}$ because the SE allows a much wider choice of the function $\eta$.

\section{Usage of the SE} \label{sec:SE_usage}
Let us explain how to use the SE when applying it to concrete calculations.

First, choose $\eta$ which satisfies conditions~\ref{cond:A}-\ref{cond:D},
depending on the physical situations and the numerical method to be used.
To obtain accurate results,
special attention should be paid to condition~\ref{cond:B}.
When one can deal with the thermodynamic limit,
it is sufficient
if $\s(\bm{x})-\eta(\bm{\kappa};\bm{x})$ is strongly concave
as required by condition~\ref{cond:B}.
However, in practical calculations for finite $N$ that is not so large,
it is necessary to take the convexity of $\eta$ strong enough
such that it overcomes the concavity breaking
caused by finite-size effects \cite{Yoneta2019}.

Next, construct the SE associated with $\eta$ chosen above
according to Eq.~\eqref{eq:rho_N^eta} and \eqref{eq:psi_N^eta}.
Then, the density matrix $\hat{\rho}_N^\eta$
always gives a macroscopically definite equilibrium state.
Therefore, by calculating the expectation values in the density matrix,
one can obtain the equilibrium values of the local observables and additive observables
for any equilibrium state in the thermodynamic state space,
including the first-order phase transition regions.
 
One can also calculate genuine thermodynamic quantities of the equilibrium state.
In particular, 
using Eq.~\eqref{eq:tdforce-formula},
one can obtain the intensive parameters
just by calculating the expectation value of $\hat{\bm{x}}_N$,
without resorting to numerical differentiation.

Furthermore, one can obtain other genuine thermodynamic quantities,
such as the thermodynamic entropy density $\s$,
by using Eqs.~\eqref{eq:entropy-formula} or \eqref{eq:psi-s}.
In particular, by using Eq.~\eqref{eq:psi-s},
one can obtain all the thermodynamic properties of the system
by simply computing $\psi_N^\eta$ as a function of $\bm{\kappa}$.

In the above manner,
one can fully investigate microscopic structures and thermodynamic properties
of general quantum systems, 
even for phase coexistence states without any ad hoc procedures
such as finding and imposing clever boundary conditions according to the phases to be coexisted.
We will demonstrate this point in Section~\ref{sec:d2Ising}.

\section{Free Spins} \label{sec:freeSpin}
In this and next sections, we apply our formulation 
to concrete examples
and demonstrate its validity and availability.
As the first example, we apply our formulation to the free spins,
which was mentioned in Section~\ref{sec:noncommutativity},
in this section.
Although this system does not undergo a first-order phase transition,
it will help to understand properties of the SE with noncommutative $\hat{\bm{x}}_N$.

\subsection{Comparison with the canonical ensemble} \label{sec:comp_CE_free_spins}
As discussed in Section~\ref{sec:CE}, 
the {\CE} always gives the correct thermodynamic functions.
Furthermore, in this simple model
the {\CE} also gives the correct equilibrium state
because the phase transition does not occur.
By contrast, as discussed in Section~\ref{sec:noncommutativity},
we cannot employ either the {\MCE} or the RE
to the free spin model defined by Eqs.~\eqref{eq:X0_free_spin} and \eqref{eq:X1_free_spin}
because $\hat{X}_{\indind{N}{0}}$ and $\hat{X}_{\indind{N}{1}}$
do not commute.
Let us confirm that the SE gives the correct equilibrium sate and thermodynamic functions even in this case
by comparing them with those obtained from the {\CE}.

We have a wide choice of the parameter space $K$ of $\bm{\kappa}$ and the functional form of $\eta$.
We here choose them as
\begin{align}
  K &= (0,\infty)\times\mathbb{R},\\
  \eta(\bm{\kappa};\bm{x}) &= \kappa_{\ind{0}} x_{\ind{0}} + \frac{1}{2} \lambda (x_{\ind{1}}-\kappa_{\ind{1}})^2, \label{eq:freeSpin_eta}
\end{align}
where $\lambda$ is a positive constant.
Since this $\eta$ satisfies all conditions~\ref{cond:A}-\ref{cond:D},
this is an SE,
and therefore $\hat{\bm{x}}_N$ has a macroscopically definite value in its density matrix,
as shown in Eq.~\eqref{eq:var_x_N^eta}.

On the other hand,
the {\CE} for this system is obtained
by taking $\eta$ as Eq.~\eqref{eq:CE_c}, i.e.,
\begin{align}
  \indexCE(\bm{\pi};\bm{x}) &= \pi_{\ind{0}} x_{\ind{0}} + \pi_{\ind{1}} x_{\ind{1}}.
\end{align}
This $\eta$ for the {\CE} also satisfies conditions~\ref{cond:A}-\ref{cond:D}
in this simple system because a first-order phase transition is absent.
Hence, $\hat{\bm{x}}_N$ has a macroscopically definite value
also in the density matrix of the {\CE}.
Therefore, if one takes the values of $\bm{\kappa}$ and $\bm{\pi}$ 
appropriately (as Eq.~\eqref{eq:Pi^c=Pi^eta} below), 
the {\CE} and the SE represent the same equilibrium state.
That is, 
the SE associated with $\eta$ in Eq.~\eqref{eq:freeSpin_eta}
is equivalent to the {\CE}
not only for the thermodynamic function
but also for the density matrix,
in the thermodynamic limit.

We demonstrate these facts numerically.
In numerical calculations, 
analytic results are used for the {\CE}.
For the SE, we use 
the thermal pure quantum formulation \cite{Sugiura2012,Sugiura2013}, 
in which 
$\hat{\rho}_N^\eta(\bm{\kappa})$
is replaced with the (unnormalized) pure state,
\begin{align}
  e^{-N\eta(\bm{\kappa};\hat{\bm{x}}_N)/2} \ket{\psi_\mathrm{rand}}, \label{eq:rho_N^eta_TPQ}
\end{align}
which gives the same results as $\hat{\rho}_N^\eta(\bm{\kappa})$
and $  \psi_N^\eta (\bm{\kappa})$
with exponentially small probability of errors.
Here, $\ket{\psi_\mathrm{rand}}$ is a Haar-random vector in the whole Hilbert space.
Note that this replacement is valid not only for the above $\eta$ but also 
for all SEs.

\subsection{Density Matrix}
First, we demonstrate that the density matrix $\hat{\rho}_N^\eta$ given by the SE
describes the equilibrium state
which has the intensive parameters given by Eq.~\eqref{eq:tdforce-formula}.

Since this model does not undergo the first-order phase transition,
the equilibrium state is uniquely specified
by the set of intensive parameters $\bm{\Pi}$.
Therefore, in order for the {\CE} and the SE
to describe the same equilibrium state
in the thermodynamic limit,
we choose the parameters $\bm{\pi}$ and $\bm{\kappa}$ so that
\begin{align}
  \lim_{N\to\infty} \bm{\Pi}_N^\indexCE(\bm{\pi}) = \lim_{N\to\infty} \bm{\Pi}_N^\eta(\bm{\kappa}).
\label{eq:Pi^c=Pi^eta}
\end{align}
In concrete calculations, we fix $\bm{\kappa}$
and calculate statistical-mechanical quantities in the SE,
including $\bm{\Pi}_N^\eta(\bm{\kappa})$.
Then we set $\bm{\pi}$ in such a way
that $\bm{\pi}=\bm{\Pi}_N^\eta(\bm{\kappa})$ for each $N$.
By doing so, we have
\begin{align}
  \bm{\Pi}_N^\indexCE(\bm{\pi}=\bm{\Pi}_N^\eta(\bm{\kappa})) = \bm{\Pi}_N^\eta(\bm{\kappa})
\end{align}
from Eq.~\eqref{eq:tdforce-formula_CE}.
Therefore,
the {\CE} and the SE should describe the same equilibrium state in the thermodynamic limit.

To demonstrate the equivalence for the density matrix,
we have plotted in Fig.~\ref{fig:freeSpin_N-z}
the $N$ dependence of the difference in the expectation value of 
$\hat{x}_{\indind{N}{1}}=\frac{1}{N}\sum_{i=1}^N \hat{\sigma}_i^y$
between the density matrices given by the {\CE} and the SE,
\begin{align}
  \left| x_{\indind{N}{1}}^\eta(\bm{\kappa})
       - x_{\indind{N}{1}}^\indexCE(\bm{\Pi}_N^\eta(\bm{\kappa})) \right|.
\end{align}
We find that the difference decreases with increasing $N$, 
proportionally to $N^{-1}$.
We thus confirm that the SE gives the same expectation value in the thermodynamic limit.

\begin{figure}
  \centering
  \includegraphics[keepaspectratio, width=\linewidth]{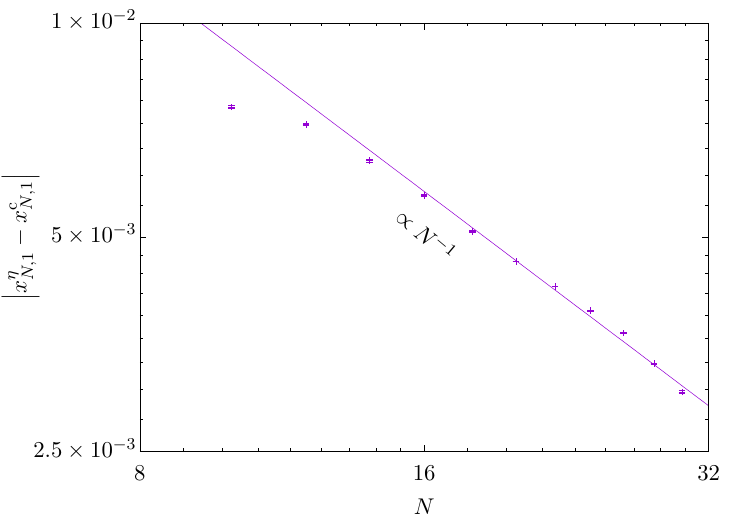}
  \caption{$N$ dependence of the difference in the expectation value of 
  $\hat{x}_{\indind{N}{1}}=\frac{1}{N}\sum_{i=1}^N \hat{\sigma}_i^y$}
  between the density matrices given by the {\CE} and the SE associated with $\eta$ in Eq.~\eqref{eq:freeSpin_eta},
  for the free spin model defined by Eqs.~\eqref{eq:X0_free_spin} and \eqref{eq:X1_free_spin}.
  \label{fig:freeSpin_N-z}
\end{figure}

In addition, we have plotted in Fig.~\ref{fig:freeSpin_N-var}
the $N$ dependence of the variances of $\hat{x}_{\indind{N}{0}}$ and $\hat{x}_{\indind{N}{1}}$ in the SE.
We find that the variances decay as $O(N^{-1})$.
This is consistent with the results obtained by Laplace's approximation (see Section~\ref{sec:densityMatrix}).
Therefore, it is confirmed that
the SE gives the macroscopically definite state
even for the noncommutative case.
This is in sharp contrast with the RE,
for which the variance of $\hat{x}_{\indind{N}{0}}$ remains finite
even in the thermodynamic limit
due to the noncommutativity of $\hat{x}_{\indind{N}{0}}$ and $\hat{x}_{\indind{N}{1}}$,
as discussed in Section~\ref{sec:noncommutativity}.

\begin{figure}
  \centering
  \includegraphics[keepaspectratio, width=\linewidth]{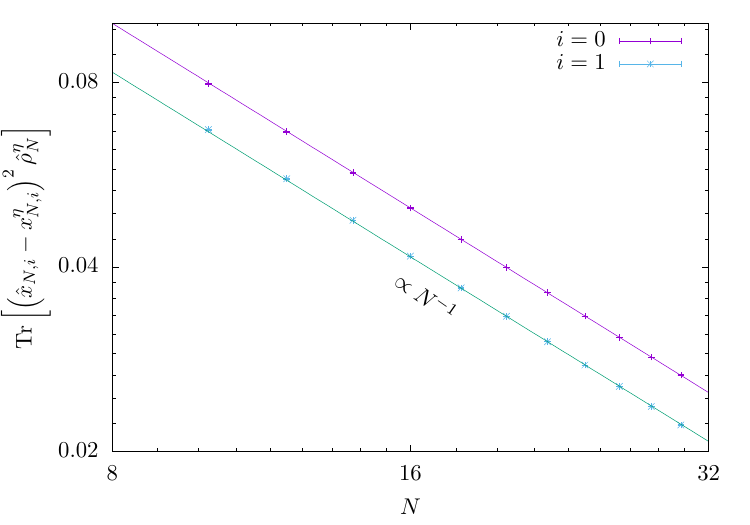}
  \caption{$N$ dependence of the variances of
  $\hat{x}_{\indind{N}{0}}=\frac{1}{N}\sum_{i=1}^N \hat{\sigma}_i^x$
  and $\hat{x}_{\indind{N}{1}}=\frac{1}{N}\sum_{i=1}^N \hat{\sigma}_i^y$
  in the SE associated with $\eta$ in Eq.~\eqref{eq:freeSpin_eta}
  with $\lambda=0.5$ and $\bm{\kappa}=(1,0.5)$,
  for the free spin model defined by Eqs.~\eqref{eq:X0_free_spin} and \eqref{eq:X1_free_spin}.}
  \label{fig:freeSpin_N-var}
\end{figure}

These results demonstrate that the SE successfully gives the correct equilibrium state
even for systems with noncommutative additive observables.

\subsection{Thermodynamic function}
Next, we demonstrate that the thermodynamic function $\psi^\eta(\bm{\kappa})$ of the SE
is equivalent to the thermodynamic entropy via Eqs.~\eqref{eq:s-psi}-\eqref{eq:psi-s}.

The free spin model defined by Eqs.~\eqref{eq:X0_free_spin} and \eqref{eq:X1_free_spin}
has the normal property that
the canonical thermodynamic function $\psi^\indexCE(\bm{\pi})$
is equivalent to the thermodynamic entropy:
\begin{align}
  \s(\bm{x}) &= \inf_{\bm{\pi} \in \mathbb{R}^2} \left\{ \bm{\pi}\cdot\bm{x} - \psi^\indexCE(\bm{\pi}) \right\}. \label{eq:freeSpin_s-psi_CE}
\end{align}
According to Eq.~\eqref{eq:s-psi},
this $\s$ is equivalent to the thermodynamic function $\psi^\eta(\bm{\kappa})$ of the SE:
\begin{align}
  \psi^\eta(\bm{\kappa}) = \inf_{\bm{x} \in \Omega} \left\{ \eta(\bm{\kappa};\bm{x}) - \s(\bm{x}) \right\}.
\end{align}
To demonstrate this equivalence, we have plotted in Fig.~\ref{fig:freeSpin_N-psi} the $N$ dependence of
\begin{align}
  \left| \psi_N^\eta(\bm{\kappa}) - \inf_{\bm{x} \in \Omega} \left\{ \eta(\bm{\kappa};\bm{x}) - \s(\bm{x}) \right\} \right|,
\end{align}
where $\s(\bm{x})$ is calculated from
the canonical thermodynamic function using Eq.~\eqref{eq:freeSpin_s-psi_CE}.
We find that the difference decreases with increasing $N$,
nearly proportionally to $N^{-1}$.
Therefore, 
it is confirmed that
$\psi^\eta(\bm{\kappa})$ is equivalent to the thermodynamic entropy density
in the thermodynamic limit.

\begin{figure}
  \centering
  \includegraphics[keepaspectratio, width=\linewidth]{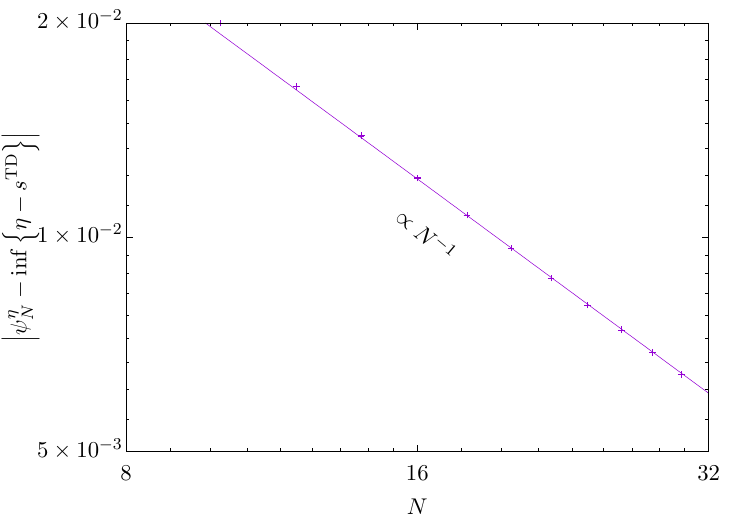}
  \caption{$N$ dependence of the difference
  between the thermodynamic function $\psi_N^\eta$ given by the SE associated with $\eta$ in Eq.~\eqref{eq:freeSpin_eta}
  with $\lambda=0.5$ and $\bm{\kappa}=(1,0.5)$
  and the generalized Legendre-Fenchel transformation of the thermodynamic entropy $\s$,
  for the free spin model defined by Eqs.~\eqref{eq:X0_free_spin} and \eqref{eq:X1_free_spin}.}
  \label{fig:freeSpin_N-psi}
\end{figure}

\section{Application to the two-dimensional transverse field Ising model} \label{sec:d2Ising}
We apply our formulation to a system
which undergoes a first-order phase transition with an order parameter
that does not commute with the Hamiltonian
and thereby demonstrate numerically that the SE indeed gives 
the density matrix corresponding to
a phase coexistence state
even in such a case which cannot be treated by conventional ensembles.

To be concrete, we apply our formulation
to the two-dimensional transverse field Ising model
on the square lattice,
defined by the Hamiltonian
\begin{align}
  \hat{H}_N = \hat{X}_{\indind{N}{0}}
  \equiv - J \sum_{\langle  i, j \rangle} \hat{\sigma}_i^z \hat{\sigma}_j^z
  - g \sum_{i} \hat{\sigma}_i^x.
\end{align}
Here, $\langle i, j \rangle$ denotes
the nearest neighbor sites $i$ and $j$.
We take the coupling $J$ positive
and measure energy in units of $J$.

\subsection{Inapplicability of conventional ensembles}
This model has an ordered phase
for small $g$ and large $\beta$ $(=\Pi_{\ind{0}})$
\cite{Elliott1971,Pfeuty1971,Nagai1987,Rieger1999,Blote2002,Nakamura2003,Jongh1998}.
Its order parameter is given by
\begin{align}
  \hat{X}_{\indind{N}{1}} &= \sum_i \hat{\sigma}_i^z,
\end{align}
and its conjugate field is the $z$-component of a magnetic field,
$f_{\ind{1}}=-\Pi_{\ind{1}}/\Pi_{\ind{0}}$
(which should not be confused with the $x$-component, $g$).
As $f_{\ind{1}}$ is changed,
while temperature is fixed at a value lower than the critical temperature, 
a first-order phase transition occurs at $f_{\ind{1}}=0$.
There, $x_{\ind{1}}$ changes discontinuously
as a function of $f_{\ind{1}}$.
(This is called the magnetic-field-driven transition 
as opposed to the temperature-driven transition.)
This indicates that phase coexistence states are realized at $f_{\ind{1}}=0$,
and the proportions of different phases cannot be specified by $f_{\ind{1}}$.
Consequently, the {\CE}
\begin{align}
  \hat{\rho}_N^\indexCE \propto \exp [- \beta (\hat{H}_N - f_{\ind{1}} \hat{X}_{\indind{N}{1}})]
\end{align}
cannot give the phase coexistence states
(unless some trick such as an artificial boundary condition is used), 
as discussed in Section~\ref{sec:introduction}.

In order to obtain phase coexistence states
it is necessary to employ an ensemble
with macroscopically well-defined order parameter.
However, neither the {\MCE} nor the RE is applicable
because $\hat{H}_N$ and $\hat{X}_{\indind{N}{1}}$ do not commute with each other when $g \neq 0$,
as pointed out in Section~\ref{sec:noncommutativity}.

We will show in Section~\ref{sec:d2qIsing}
that the SE solves this fundamental problem
and successfully describes the first-order phase transition.
We also show in Section~\ref{sec:d2cIsing} that
the SE has advantages even when this fundamental problem is absent (i.e., when $g=0$).

\subsection{Zero transverse field} \label{sec:d2cIsing}
We first consider the case of $g = 0$,
in which the model reduces to the classical Ising model
and hence the fundamental problem mentioned above is absent.
We will show that the SE is advantages over the {\CE} even in this case.

\subsubsection{Methods}
We choose the parameter space $K$ of $\bm{\kappa}$ and $\eta$ as
\begin{align}
  K &= (0,\infty)\times\mathbb{R},\\
  \eta(\bm{\kappa};\bm{x}) &= \kappa_{\ind{0}} x_{\ind{0}} + \frac{1}{2} \lambda (x_{\ind{1}}-\kappa_{\ind{1}})^2, \label{eq:d2cIsing_eta}
\end{align}
where $\lambda$ is a positive constant.
As will be described below, this choice of $\eta$ is convenient
for classical systems which undergo first-order phase transitions.

In order for the order parameter $x_{\ind{1}}$ to have a definite value,
the convexity of $\eta$ as a function of $x_{\ind{1}}$ must be strong enough.
This condition is particularly important for finite systems.
In the analysis in Section~\ref{sec:SE_commutative},
we have used the fact that $\sigma_N-\eta$
can be approximated by the strongly concave function $\s-\eta$ in the thermodynamic limit.
However, in finite systems 
the concavity of the microcanonical entropy $\sigma_N$ is broken
near the first-order phase transition region due to finite-size effects
\cite{Schulman1980,Binder1981}.
Therefore, if the convexity of $\eta$ is weak,
$\sigma_N-\eta$ can be a nonconcave function for finite $N$,
and the analysis using Laplace's approximation breaks down.
To overcome such finite-size effects, we take the convexity of $\eta$ strong enough.
Since the second order derivative of $\eta$ with respect to $x_{\ind{1}}$ is $\lambda$,
the strength of the convexity of $\eta$ can be freely adjusted by tuning the value of $\lambda$. 
That is, for sufficiently large $\lambda$,
$\sigma_N-\eta$ becomes strongly concave,
and the order parameter has a definite value.
In fact, Eq.~\eqref{eq:p_N^eta_Gaussian} gives
\begin{align}
  &\Tr [(\hat{x}_{\indind{N}{1}}-x_{\indind{N}{1}}^\eta)^2\hat{\rho}_N^\eta] \nonumber\\
  &= N^{-1} \left[
    \lambda-\frac{\partial^2 \s}{{\partial x_{\ind{1}}}^2}(\bm{x}_\mathrm{max}^\eta)
    + \frac{
      \left(\frac{\partial^2 \s}{\partial x_{\ind{0}} \partial x_{\ind{1}}}(\bm{x}_\mathrm{max}^\eta)\right)^2
    }{
      \frac{\partial^2 \s}{{\partial x_{\ind{0}}}^2}(\bm{x}_\mathrm{max}^\eta)
    }
  \right]^{-1} + o(N^{-1}). \label{eq:d2cIsing_var}
\end{align}
Therefore, the variance of $\hat{x}_{\indind{N}{1}}$ scales as $O(N^{-1}\lambda^{-1})$ for sufficiently large $N$ and $\lambda$.

From Eq.~\eqref{eq:tdforce-formula},
the inverse temperature $\beta = \Pi_{\ind{0}}$
is given directly by $\kappa_{\ind{0}}$ as
\begin{align}
  \beta_N^\eta \equiv \Pi_{\indind{N}{0}}^\eta = \kappa_{\ind{0}}. \label{eq:d2cIsing_Pi0}
\end{align}
Equation~\eqref{eq:tdforce-formula} also gives
the intensive parameter conjugate to $X_{\ind{1}}$ as
\begin{align}
  \Pi_{\indind{N}{1}}^\eta = \lambda (x_{\indind{N}{1}}^\eta-\kappa_{\ind{1}}), \label{eq:d2cIsing_Pi1}
\end{align}
which can easily be calculated from 
the expectation value $x_{\indind{N}{1}}^\eta$.
By combining this with Eq.~\eqref{eq:d2cIsing_Pi0},
we can calculate 
the magnetic field conjugate to the order parameter as
\begin{align}
  f_{\indind{N}{1}}^\eta \equiv - \frac{\lambda (x_{\indind{N}{1}}^\eta-\kappa_{\ind{1}})}{\kappa_{\ind{0}}}. \label{eq:d2cIsing_f1}
\end{align}
Using these formulas, one obtains the intensive parameters
without differentiating the thermodynamic functions numerically.

We calculate the statistical-mechanical quantities in the SE
using the replica exchange Monte Carlo calculations \cite{Hukushima1996},
in which the acceptance probability can be easily computed
using the Metropolis algorithm \cite{Metropolis1953}
in almost the same manner as in the {\CE}.
In the SE, 
the equilibrium state changes continuously in $\bm{\kappa}$
even in the phase transition region,
as explained in Section~\ref{sec:SE_parameter}.
Therefore, 
the exchange of configurations between adjacent replicas is accepted with high probability,
and the replica exchange method works well
(as in the case of the ensembles
with macroscopically well-defined internal energy
in the temperature-driven first-order phase transition region
\cite{Kim2010,Schierz2016}.)
By contrast, in the {\CE}, different phases are separated
by the free energy barrier \cite{Schierz2016}
and the equilibrium state changes discontinuously in the parameters of the ensemble.
This greatly degrades the performance of the replica exchange method in the {\CE}.

In the SE, 
boundary conditions can be imposed arbitrarily
because the order parameter has a definite value without
imposing artificial boundary conditions.
In the present numerical simulations,
we employ periodic boundary conditions.
Owing to this choice,
we can eliminate surface effects,
which are of the same order as the effects of the phase interfaces.
That is, the SE enables us to study the phase interfaces
without suffering from the surface effects.

\subsubsection{Results}
We have plotted in Fig.~\ref{fig:d2cIsing_L-var}
the $N$ dependence of the variances of $\hat{x}_{\indind{N}{1}}$ in the {\CE} and in the SE.
In order to compare them in the first-order phase transition region
(where $\beta > \frac{1}{2} \log (1+\sqrt{2}) \simeq 0.44$ and $f_{\ind{1}} = 0$),
we focus on the equilibrium state with $\beta=0.45$ and $f_{\ind{1}}=0$.
Accordingly, we take $\bm{\kappa}=(0.45,0)$ so that $\beta_N^\eta=0.45$ and $f_{\indind{N}{1}}^\eta=0$
from Eqs.~\eqref{eq:d2cIsing_Pi0} and \eqref{eq:d2cIsing_f1}.

It is seen that in the case of the {\CE} the variance remains finite even in the thermodynamic limit.
This implies that the {\CE} gives a statistical mixture of macroscopically distinct states
which have the same values of the intensive parameters.
In fact, as shown in the upper insets of Fig.~\ref{fig:d2cIsing_L-var},
there are two typical spin configurations for the {\CE},
which are macroscopically homogeneous states
(i.e., single phases) with positive or negative $x_{\ind{1}}$.

By contrast, in the case of the SE, the variance vanishes in the thermodynamic limit as $O(N^{-1})$.
This implies that the SE successfully gives a phase coexistence state,
as can be seen from the lower insets of Fig.~\ref{fig:d2cIsing_L-var}.
We have thus confirmed that the SE is a microcanonical-like ensemble
and gives the phase coexistence state.
\begin{figure}
  \centering
  \includegraphics[keepaspectratio, width=\linewidth]{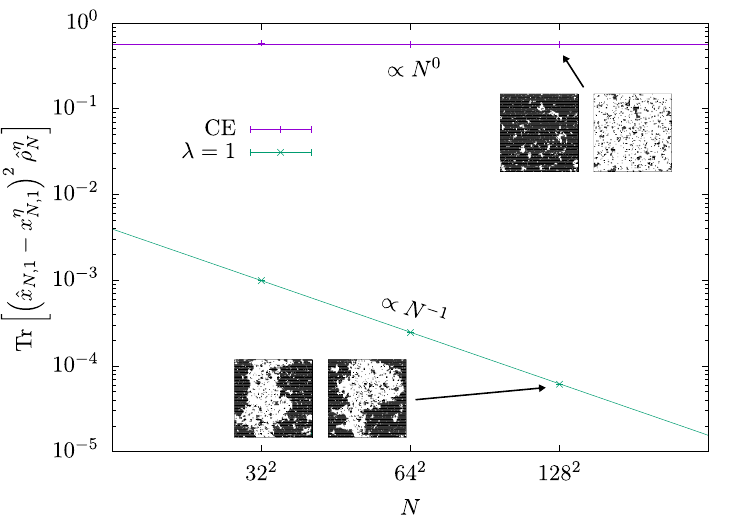}
  \caption{$N$ dependence of the variances of $\hat{x}_{\indind{N}{1}}$
  in the {\CE} and in the SE associated with $\eta$ in Eq.~\eqref{eq:d2cIsing_eta}
  at a first-order phase transition point ($\beta=0.45$ and $f_{\ind{1}}=0$)
  of the two-dimensional classical Ising model.
  Insets show typical snapshots of the Monte Carlo simulations
  (white represents $\braket{\hat{\sigma}_i^z}=+1$
  and black represents $\braket{\hat{\sigma}_i^z}=-1$).
  Two snapshots are shown for each ensemble.}
  \label{fig:d2cIsing_L-var}
\end{figure}

In order to illustrate how the convexity of $\eta$ controls the variance of $\hat{x}_{\indind{N}{1}}$,
we have plotted the variance as a function of $\lambda$ in Fig.~\ref{fig:d2cIsing_lambda-var}.
It is seen that the variance becomes smaller as the convexity of $\eta$ becomes stronger.
This is consistent with the analytical result of Eq.~\eqref{eq:d2cIsing_var}.

\begin{figure}
  \centering
  \includegraphics[keepaspectratio, width=\linewidth]{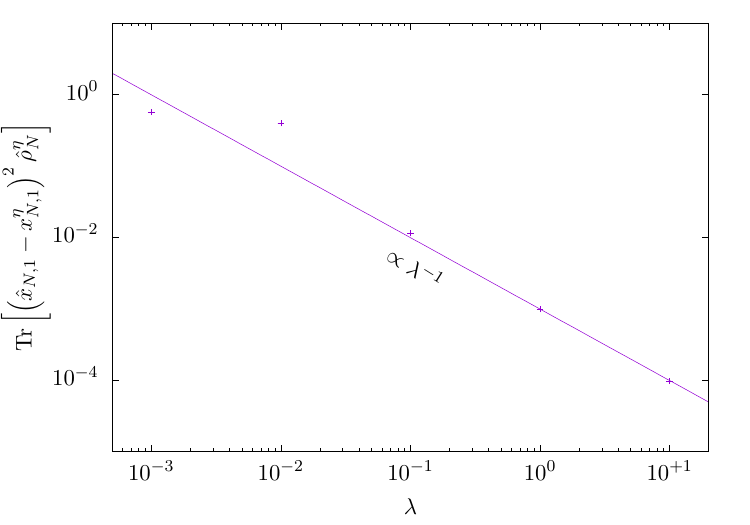}
  \caption{Relation between the strength $\lambda$ of the convexity of $\eta$
  defined by Eq.~\eqref{eq:d2cIsing_eta}
  and the variance of $\hat{x}_{\indind{N}{1}}$ in the SE
  at the first-order phase transition point ($\beta=0.45$ and $f_{\ind{1}}=0$)
  of the two-dimensional classical Ising model with $N=32^2$ spins.}
  \label{fig:d2cIsing_lambda-var}
\end{figure}

To see more details, we have plotted in Fig.~\ref{fig:d2cIsing_ordered_tdf-op}
the relation between the magnetic field $f_{\indind{N}{1}}^\eta$
and the order parameter $x_{\indind{N}{1}}^\eta$,
in the ordered phase for the {\CE} and 
for the SEs with various values of $\lambda$.
Again, we take $\kappa_{\ind{0}}=0.45$.

When the {\CE} is used,
$x_{\indind{N}{1}}^\indexCE=\Tr[\hat{x}_{\indind{N}{1}}\hat{\rho}_N^\indexCE]$ 
cannot be a multivalued function of $f_{\ind{1}}$
(because the {\CE} is a function of $f_{\ind{1}}$)
and hence varies monotonically and continuously.
The singularities of thermodynamic functions
in the first-order phase transition region
are almost smeared out
due to the large fluctuation inherent in the {\CE} in such a region.
This makes it difficult to identify the phase transition and to determine its order \cite{Stump1987,Huller1992,Huller1994}.

This should be contrasted with the results of the SEs,
which show that $x_{\indind{N}{1}}^\eta$ is a multivalued,
S-shaped function of $f_{\indind{N}{1}}^\eta$
for sufficiently large $\lambda$.
This is a manifestation of the concavity breaking of the microcanonical entropy
due to the finite-size effects of the phase interface \cite{Challa1988_PRA,Yoneta2019}.
It is thus confirmed that
thermodynamic anomalies are correctly obtained
and the phase transition can be detected directly by the SE.

It is also seen from Fig.~\ref{fig:d2cIsing_ordered_tdf-op}
that the results of the SE become insensitive to the magnitude of $\lambda$
for sufficiently large $\lambda$ ($\gtrsim 1$).
Therefore, one can use an arbitrary value of $\lambda$
as long as it is sufficiently large.
This is understood from the derivation in Section~\ref{sec:SE_commutative}:
As long as $\sigma_N-\eta$ is well approximated by the strongly concave function $\s-\eta$,
the probability distribution $p_N^\eta$ has a sharp peak,
and the statistical-mechanical quantities do not depend on the details of $\eta$.

\begin{figure}
  \centering
  \includegraphics[keepaspectratio, width=\linewidth]{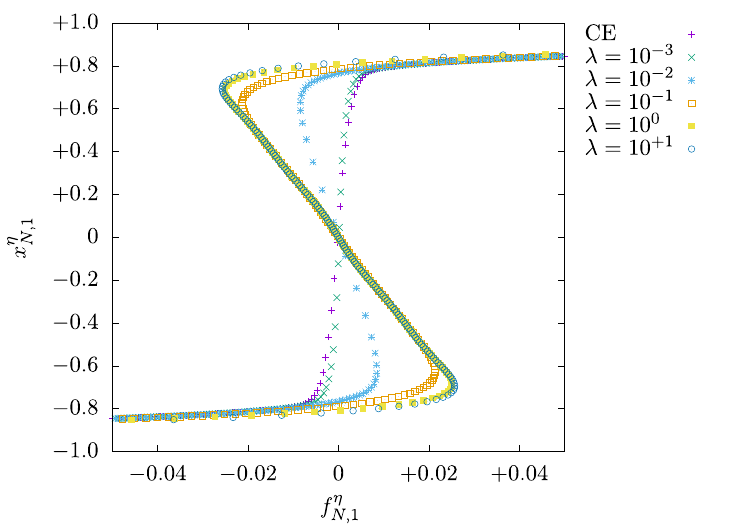}
  \caption{Relation between the magnetic field and the order parameter
  in the vicinity of the first-order phase transition point ($\beta=0.45$ and $f_{\ind{1}}=0$)
  of the two-dimensional classical Ising model with $N=32^2$ spins
  for the {\CE} and the SEs associated with $\eta$ in Eq.~\eqref{eq:d2cIsing_eta}.}
  \label{fig:d2cIsing_ordered_tdf-op}
\end{figure}

We have thus confirmed that by
using the SE
one can obtain phase coexistence states
and finite-size effects due to phase interfaces
for the commutative case.
These results also demonstrate that Laplace's approximation used in Section~\ref{sec:SE_commutative} 
will become exact in the thermodynamic limit.

\subsection{Finite transverse field} \label{sec:d2qIsing}
To demonstrate that the SE gives phase coexistence states 
with an order parameter noncommutative with the Hamiltonian,
we consider the case of $g \neq 0$ and take a moderate value, $g=1$.

\subsubsection{Methods}
We choose the parameter space $K$ of $\bm{\kappa}$ and $\eta$ as
\begin{align}
  K &= (0,+\infty)\times(-\infty,+\infty),\\
  \eta(\bm{\kappa};\bm{x}) &= \kappa_{\ind{0}} x_{\ind{0}} + \kappa_{\ind{1}} x_{\ind{1}} - 2 \nu \log (x_{\ind{1}}+1), \label{eq:d2qIsing_eta}
\end{align}
where $\nu$ is an appropriate positive constant (see below).

This choice of $\eta$ is particularly convenient
for quantum systems which undergo a first-order phase transition
because, as shown in Appendix~\ref{sec:d2qIsing_method},
the density matrix can be approximated as the product of local operators,
and therefore the SE can be numerically constructed easily even for quantum systems.

Since the second order derivative of $\eta$ with respect to $x_{\ind{1}}$
is $2\nu/(x_{\ind{1}}+1)^2$,
the strength of the convexity of $\eta$ can be freely adjusted by tuning the value of $\nu$.
That is, by taking sufficiently large $\nu$,
we can make the order parameter to have a definite value
even in the first-order phase transition region.

From Eq.~\eqref{eq:tdforce-formula}, 
the inverse temperature $\beta = \Pi_{\ind{0}}$
is given directly by $\kappa_{\ind{0}}$ as
\begin{align}
  \Pi_{\indind{N}{0}}^\eta = \kappa_{\ind{0}}. \label{eq:d2qIsing_Pi0}
\end{align}
Therefore, we can investigate properties of the equilibrium states 
of inverse temperature $\beta$ by taking $\kappa_{\ind{0}}=\beta$.
Furthermore, again using Eq.~\eqref{eq:tdforce-formula},
the intensive parameter conjugate to $X_{\ind{1}}$
can also be easily calculated from the expectation value $x_{\indind{N}{1}}^\eta$ as
\begin{align}
  \Pi_{\indind{N}{1}}^\eta = \kappa_{\ind{1}} - \frac{2\nu}{x_{\indind{N}{1}}^\eta+1}. \label{eq:d2qIsing_Pi1}
\end{align}
Therefore, by combining this with Eq.~\eqref{eq:d2qIsing_Pi0},
the magnetic field conjugate to the order parameter can be calculated as
\begin{align}
  f_{\indind{N}{1}}^\eta 
  \equiv - \frac{
    \kappa_{\ind{1}} - \frac{\displaystyle 2\nu}{\displaystyle x_{\indind{N}{1}}^\eta+1}
  }{
    \kappa_{\ind{0}}}. \label{eq:d2qIsing_f1
  }
\end{align}

We calculate statistical-mechanical quantities in the SE
using 
the minimally entangled typical thermal states (METTS) algorithm \cite{White2009,Stoudenmire2010}, 
by extending it to the SE as follows.
We take the quantization axis along $z$-direction.
Then the ``classical product states''
(product of spin eigenstates along the quantization axis) is
\begin{align}
  \ket{s} \equiv \bigotimes_i \ket{s_i}
  \quad (\hat{\sigma}_i^z \ket{s_i} = s_i \ket{s_i}).
\end{align}
We define the METTS of the SE as
\begin{align}
  \ket{s;\eta} \equiv \frac{1}{\sqrt{\braket{s|e^{-N\eta(\hat{\bm{x}}_N)}|s}}} e^{-\frac{1}{2}N\eta(\hat{\bm{x}}_N)} \ket{s}.
\end{align}
Using the completeness of the classical product states,
we can decompose $\hat{\rho}_N^\eta$ into a convex mixture of $\ket{s;\eta}$ as
\begin{align}
  \hat{\rho}_N^\eta
  \propto \sum_{s_i=\pm 1} p(s;\eta) \ket{s;\eta}\bra{s;\eta},
\end{align}
where the weight $p(s;\eta)$ is given by
\begin{align}
  p(s;\eta) \equiv \braket{s|e^{-N\eta(\hat{\bm{x}}_N)}|s}.
\end{align}
Therefore, the expectation value in $\hat{\rho}_N^\eta$ can be calculated
by averaging the expectation value in $\ket{s;\eta}$,
which is sampled according to the weight $p(s;\eta)$.

The numerical simulations are performed
for the square lattice of size $L_1 \times L_2$ with
periodic boundary conditions along $L_1$-direction
and open boundary conditions along $L_2$-direction.

\subsubsection{Results}
Figure~\ref{fig:d2qIsing_histogram} shows
the histogram of the number of samples
as a function of $\braket{s;\eta|\hat{x}_{\indind{N}{1}}|s;\eta}$
for the {\CE} and the SE.
In order to compare them in the first-order phase transition region,
where $\beta \gtrsim 2.15^{-1}$ and $f_{\ind{1}}=0$ \cite{Nagai1987,Nakamura2003},
we focus on the equilibrium state with $\beta=0.5$ and $f_{\ind{1}}=0$.
Hence, we take $\kappa_{\ind{0}}=0.5$
so that $\beta_N^\eta=0.5$
from Eq.~\eqref{eq:d2qIsing_Pi0}.
Furthermore, we take $\kappa_{\ind{1}}=2\nu$.
Then, from Eq.~\eqref{eq:d2qIsing_Pi1},
the SE gives the density matrix
where the order parameter $x_{\ind{1}} \to 0$ in the thermodynamic limit.
That is, we study the equilibrium state with 
$\beta=0.5$ and $x_{\ind{1}}=0$.

For the {\CE}, the histogram of 
Fig.~\ref{fig:d2qIsing_histogram} shows a double peak structure.
This implies that
the density matrix given by the {\CE} is
a statistical mixture of two single-phase states
which have the same values of the intensive parameters.
To see this more directly, we examined order parameter profiles.
Considering that the order parameter $\hat{X}_{\indind{N}{i}}$
is the sum of $\hat{\sigma}_i^z$,
we have calculated the expectation value of $\hat{\sigma}_i^z$
of individual lattice sites
in typical samples of METTS
and have displayed them for two samples 
in Fig.~\ref{fig:d2qIsing_snapshots} (left).
It is seen that
typical METTS of the {\CE} have 
macroscopically homogeneous profiles of the order parameter,
with positive or negative $x_{\ind{1}}$.

In the SE, by contrast, Fig.~\ref{fig:d2qIsing_histogram} shows
that the order parameter has a macroscopically definite value.
This implies that the SE successfully gives a phase coexistence state.
In fact, as shown in Fig.~\ref{fig:d2qIsing_snapshots} (right),
two phases coexist 
in both samples of the SE.
In addition,
since the order parameter profile is different between these samples, 
it is seen that 
the SE is a convex mixture of the states which are macroscopically identical
except for the spatial arrangement of the coexisting phases.
As known from thermodynamics,
the spatial arrangement of the coexisting phases
is not determined by the values of the additive quantities (of the total system) alone. 
Therefore, the SE is consistent with thermodynamics.

\begin{figure}
  \centering
  \includegraphics[keepaspectratio, width=\linewidth]{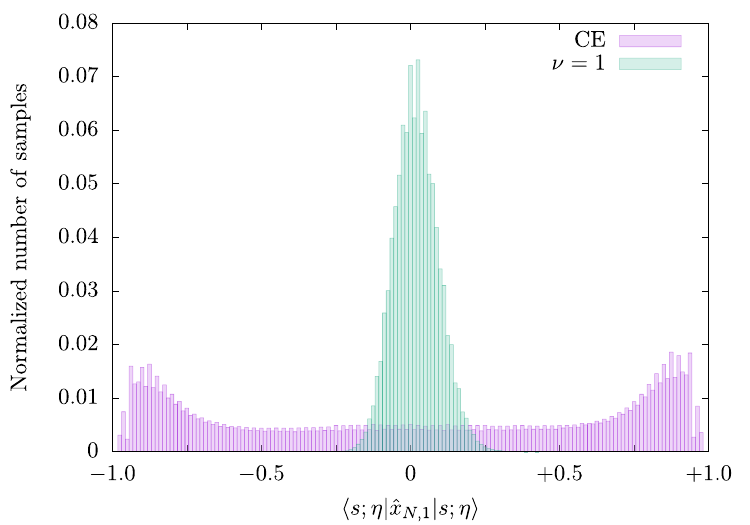}
  \caption{Histogram of the number of samples
  as a function of $\braket{s;\eta|\hat{x}_{\indind{N}{1}}|s;\eta}$
  at the first-order phase transition point ($\beta=0.5$ and $f_{\ind{1}}=0$)
  of the two-dimensional transverse field Ising model ($g=1$) with $L_1=5,L_2=16$
  for the {\CE} and the SE associated with $\eta$ in Eq.~\eqref{eq:d2qIsing_eta}.}
  \label{fig:d2qIsing_histogram}
\end{figure}

\begin{figure}
  \centering
  \includegraphics[keepaspectratio, width=\linewidth]{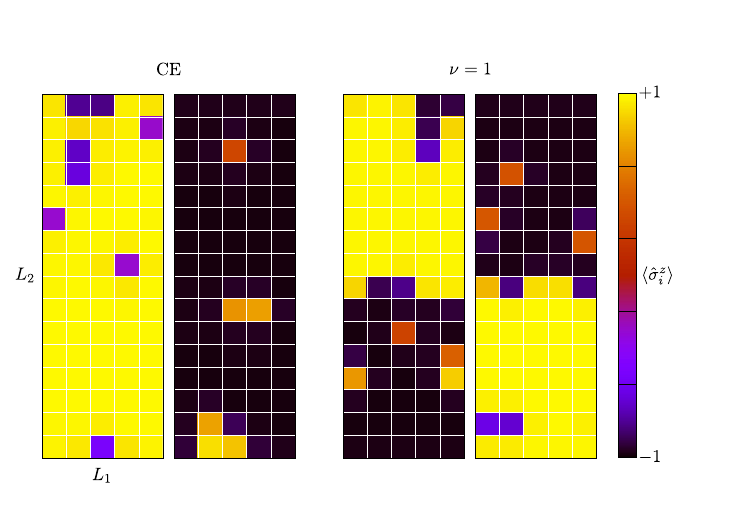}
  \caption{Order parameter profiles in typical 
  samples of METTS of the {\CE} (left)
  and the SE (right) associated with $\eta$ in Eq.~\eqref{eq:d2qIsing_eta}
  at the first-order phase transition point ($\beta=0.5$ and $f_{\ind{1}}=0$)
  of the two-dimensional transverse field Ising model ($g=1$) with $L_1=5,L_2=16$
  (color map, where yellow represents $\braket{\hat{\sigma}}_i^z=+1$
  and black represents $\braket{\hat{\sigma}}_i^z=-1$).
  Two samples   are shown for each ensemble.}
  \label{fig:d2qIsing_snapshots}
\end{figure}

Figure~\ref{fig:d2qIsing_ordered_tdf-op} shows the relation
between the magnetic field and the order parameter.
In the SE,
it is seen that 
$x_{\indind{N}{1}}^\eta$ is a multivalued function of $f_{\indind{N}{1}}^\eta$,
as in the classical case $g = 0$, Fig.~\ref{fig:d2cIsing_ordered_tdf-op}.
(Note that only the upper half region,
in which  $x_{\indind{N}{1}}^\eta \geq 0$, 
is plotted in Fig.~\ref{fig:d2qIsing_ordered_tdf-op}.)
This behavior is due to the presence of interfaces (domain walls)
separating the coexisting phases,
which are clearly observed in Fig.~\ref{fig:d2qIsing_snapshots} (right) \cite{Challa1988_PRA,Yoneta2019}.
In the {\CE}, by contrast, it is a monotonous and continuous function.

\begin{figure}
  \centering
  \includegraphics[keepaspectratio, width=\linewidth]{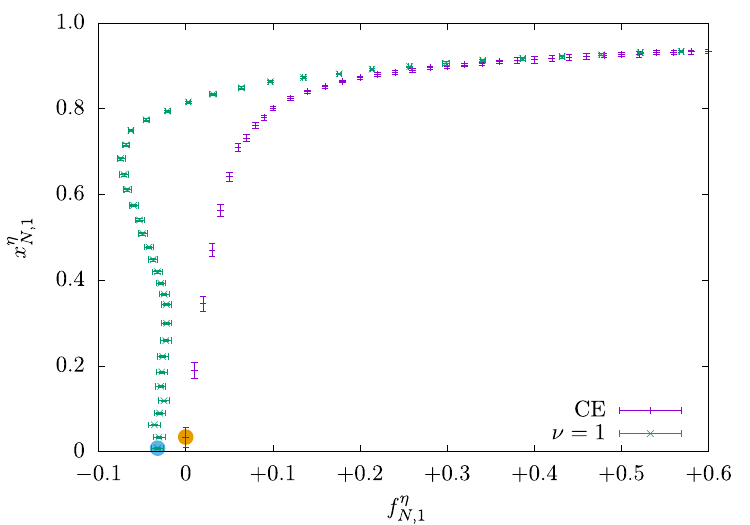}
  \caption{Relation between the magnetic field and the order parameter
  in the vicinity of the first-order phase transition point ($\beta=0.5$ and $f_{\ind{1}}=0$)
  of the two-dimensional transverse field Ising model ($g=1$) with $L_1=5,L_2=16$
  for the {\CE} and the SE associated with $\eta$ in Eq.~\eqref{eq:d2qIsing_eta}.
  The filled circles indicate the points corresponding to the parameters of Figs.~\ref{fig:d2qIsing_histogram} and \ref{fig:d2qIsing_snapshots}
  for the {\CE} and the SE, respectively, 
  where slight deviations from $(0,0)$ are due to finite-size effects.}
  \label{fig:d2qIsing_ordered_tdf-op}
\end{figure}

These results demonstrate that
the SE successfully gives a phase coexistence state
with phase interfaces,
whereas the {\CE} gives a convex mixture of single-phase states.

\section{Summary and Discussion} \label{sec:summary}
We have extended a generalized ensemble,
which we call the squeezed ensemble (SE),
in such a way that it is applicable to equilibrium states
specified by noncommutative additive observables.

In contrast to the canonical ensemble,
additive observables which specify the equilibrium state
always have macroscopically definite values in the SE.
Furthermore, 
unlike the microcanonical and the restricted ensembles, the SE
is well-defined and well-behaved even when the order parameter(s)
does not commute with the other additive observables
such as the Hamiltonian.
Therefore, it solves the fundamental problems
faced by conventional ensembles
in studying such quantum systems when they exhibit first-order phase transitions.

Various ensembles are included in the class of the SE.
One can choose an appropriate SE depending on the purpose,
such as one that is easy to construct.
In addition, good analytic properties of the the SE
yield practical formulas for thermodynamic quantities
such as the entropy and the intensive parameter.
For these reasons,
the SE is convenient for practical calculations.
We have thus established, for the first time,
a concrete method for constructing phase coexistence states
of general quantum systems.

As an demonstration, we have applied our formulation
to the two-dimensional transverse field Ising model at finite temperature,
whose order parameter does not commute with the Hamiltonian.
We have confirmed that the SE is composed of thermodynamically identical states
with large domains (ordered phases) separated by interfaces (domain walls).
In addition, we have obtained thermodynamic anomalies
peculiar to such phase coexistence states in finite systems.
These results demonstrate that the SE successfully gives a phase coexistence state.
From a viewpoint of applications to engineering, 
we expect that the SE enables the incorporation of quantum effects
into the analysis and design of materials with phase interfaces.

In this paper, we have formulated the statistical mechanics
only for the leading terms of statistical mechanical quantities
in the thermodynamic limit
for the case where the additive observables do not commute with each other.
However, finite-size effects due to phase interfaces,
such as the negative susceptibility,
are also important properties of phase coexistence states.
From the numerical results,
for the equilibrium state without phase interfaces,
it is expected that finite-size effects
on statistical-mechanical quantities of $O(N^0)$ given by the SE
is generally $O(N^{-1})$.
Therefore, the ensemble dependence,
which is also a finite-size effect, should be $O(N^{-1})$.
Now let us assume that this also holds for equilibrium states with phase interfaces.
Then, since the spatial dimension $\nu$ is larger than $1$
for systems that exhibit first-order phase transitions,
the finite-size effects of $O(N^{-1/\nu})$,
which includes the effects of the phase interfaces,
are independent of the ensemble and are expected to be physical.
Therefore, it is expected that the SE can correctly give the finite-size effects due to the phase interfaces.
A more detailed analysis of the finite-size effects of the SE is left for future work.

\begin{acknowledgments}
We thank Y. Chiba, H. Tasaki, and H. Hakoshima for discussions.
This work was supported by The Japan Society for the Promotion of Science, KAKENHI No.~19H01810, No.~20J22982 and No.~22H01142.
\end{acknowledgments}

\onecolumngrid
\appendix

\section{Derivation of Eq.~\eqref{eq:freeSpin_RE_var}} \label{sec:derivation_freeSpin_RE}
We first calculate the ``partition function'' of the RE.
The projection operator $\hat{P}_N$ can be written as
\begin{align}
  \hat{P}_N = \sum_{M=M^-}^{M^+} \sum_{\sum s_i=M} \bigotimes_{i=1}^N \ket{s_i}\bra{s_i}
  \quad (\hat{\sigma}_i^y \ket{s_i} = s_i \ket{s_i}),
\end{align}
where $M^+ \equiv 2 \lfloor N m^+ / 2 \rfloor, M^- \equiv 2 \lceil N m^- / 2 \rceil$.
Therefore, we have
\begin{align}
  \Tr \left[ \hat{P}_N e^{-\pi_{\ind{1}}\hat{X}_{\indind{N}{1}}} \hat{P}_N \right]
  &= 2^N \cosh^N\pi_{\ind{1}} \sum_{M=M^-}^{M^+} \binom{N}{\frac{N+M}{2}}.
\end{align}

Next, we evaluate the expectation value of $\hat{x}_{\indind{N}{1}}$ in $\hat{\rho}_N^\mathrm{r}$.
Without loss of generality,
we assume $\pi_{\ind{1}} < 0$.
Then
\begin{align}
  (0<)
  &\Tr \left[ \sigma_i^x \hat{P}_N e^{-\pi_{\ind{1}}\hat{X}_{\indind{N}{1}}} \hat{P}_N \right] \nonumber\\
  &= - 2^N \sinh\pi_{\ind{1}} \cosh^{N-1}\pi_{\ind{1}} \left(
    \sum_{M=M^-+2}^{M^+-2} \binom{N}{\frac{N+M}{2}}
    + \binom{N-1}{\frac{(N-1)+(M^+-1)}{2}}
    + \binom{N-1}{\frac{(N-1)+(M^-+1)}{2}}
  \right) \nonumber\\
  &= -2^N  \sinh\pi_{\ind{1}} \cosh^{N-1}\pi_{\ind{1}} \left(
    \sum_{M=M^-}^{M^+} \binom{N}{\frac{N+M}{2}}
    - \frac{N-M^+}{2N} \binom{N}{\frac{N+M^+}{2}}
    - \frac{N+M^-}{2N} \binom{N}{\frac{N+M^-}{2}}
  \right) \nonumber\\
  &< - 2^N \sinh\pi_{\ind{1}} \cosh^{N-1}\pi_{\ind{1}} \left(
    1 - \frac{\frac{N+M^-}{2N} \binom{N}{\frac{N+M^-}{2}}}{\sum_{M=M^-}^{M^+} \binom{N}{\frac{N+M}{2}}}
  \right) \sum_{M=M^-}^{M^+} \binom{N}{\frac{N+M}{2}} \nonumber\\
  &< - 2^N \sinh\pi_{\ind{1}} \cosh^{N-1}\pi_{\ind{1}} (1 - M^-/N) \sum_{M=M^-}^{M^+} \binom{N}{\frac{N+M}{2}}
\end{align}
In the last inequality
we have used
\begin{align}
  \sum_{M=M^-}^{M^+} \binom{N}{\frac{N+M}{2}}
  &< \binom{N}{\frac{N+M^-}{2}} \sum_{n=0}^{M^+-M^-} \left( \frac{N-M^-}{N+M^-} \right)^n
  = \binom{N}{\frac{N+M^-}{2}} \left\{ \frac{1}{1-\frac{N-M^-}{N+M^-}} - e^{-\Theta(N)} \right\}.
\end{align}
Therefore, using $\displaystyle \lim_{N\to\infty} M^\pm/N = m^\pm$, we have
\begin{align}
  (0\leq) \lim_{N\to\infty} \Tr \left[ \hat{x}_{\indind{N}{1}} \hat{\rho}_N^\mathrm{r} \right]
  &\leq - (1-m^-) \tanh\pi_{\ind{1}}.
\end{align}

Finally, we evaluate the expectation value of $\left(\hat{x}_{\indind{N}{1}}\right)^2$ in $\hat{\rho}_N^\mathrm{r}$.
In the similar manner as above, for any $0<\epsilon<m^+-m^-$, we have
\begin{align}
  &\Tr \left[ \sigma_i^x \sigma_j^x \hat{P}_N e^{-\pi_{\ind{1}}\hat{X}_{\indind{N}{1}}} \hat{P}_N \right] \nonumber\\
  &= 2^N \sinh^2\pi_{\ind{1}} \cosh^{N-2}\pi_{\ind{1}} \left(
    2 \sum_{M=M^-}^{M^+} \binom{N-2}{\frac{(N-2)+M}{2}}
    + \sum_{M=M^-+4}^{M^+} \binom{N-2}{\frac{(N-2)+(M-2)}{2}}
    + \sum_{M=M^-}^{M^+-4} \binom{N-2}{\frac{(N-2)+(M+2)}{2}}
  \right) \nonumber\\
  &= 2^N \sinh^2\pi_{\ind{1}} \cosh^{N-2}\pi_{\ind{1}} \left(
    \sum_{M=M^-}^{M^+} \frac{(N+M)(N-M)}{2N(N-1)} \binom{N}{\frac{N+M}{2}}
    + \sum_{M=M^-}^{M^+-4} \frac{(N-M)(N-M-2)}{2N(N-1)} \binom{N}{\frac{N+M}{2}}
  \right) \nonumber\\
  &> 2^N \sinh^2\pi_{\ind{1}} \cosh^{N-2}\pi_{\ind{1}} \sum_{M=M^-}^{M^-+\lfloor N\epsilon \rfloor} \frac{(N+M)(N-M)+(N-M)(N-M-2)}{2N(N-1)} \binom{N}{\frac{N+M}{2}} \nonumber\\
  &= 2^N \sinh^2\pi_{\ind{1}} \cosh^{N-2}\pi_{\ind{1}} \sum_{M=M^-}^{M^-+\lfloor N\epsilon \rfloor} (1-M/N) \binom{N}{\frac{N+M}{2}} \nonumber\\
  &> 2^N \sinh^2\pi_{\ind{1}} \cosh^{N-2}\pi_{\ind{1}} (1-M^-/N-\epsilon) \sum_{M=M^-}^{M^-+\lfloor N\epsilon \rfloor} \binom{N}{\frac{N+M}{2}} \nonumber\\
  &= 2^N \sinh^2\pi_{\ind{1}} \cosh^{N-2}\pi_{\ind{1}} (1-M^-/N-\epsilon-e^{-\Theta(N)}) \sum_{M=M^-}^{M^+} \binom{N}{\frac{N+M}{2}}.
\end{align}
Therefore, we have
\begin{align}
  \lim_{N\to\infty} \Tr \left[ \left(\hat{x}_{\indind{N}{1}}\right)^2 \hat{\rho}_N^\mathrm{r} \right]
  &\geq (1-m^--\epsilon) \tanh^2\pi_{\ind{1}}.
\end{align}
Since the choice of $\epsilon$ was arbitrary, passing to the limit as $\epsilon \to 0$, we obtain
\begin{align}
  \lim_{N\to\infty} \Tr \left[ \left(\hat{x}_{\indind{N}{1}}\right)^2 \hat{\rho}_N^\mathrm{r} \right]
  &\geq (1-m^-) \tanh^2\pi_{\ind{1}}.
\end{align}

\section{Proof of Theorem~\ref{theorem:psi}} \label{sec:proof_theorem_psi}
Using Eq.~\eqref{eq:hat-check} and $\|\hat{x}_{\indind{N}{i}}\|\leq\|\hat{o}_{\ind{i}}\|(<\infty)$,
we have
\begin{align}
  \lim_{N\to\infty}\left\| \eta(\hat{\bm{x}}_N) - \eta(\check{\bm{x}}_N) \right\|=0. \label{eq:hat-check_eta}
\end{align}
That is, for any $\epsilon > 0$, there exists $N_0\in\mathbb{N}$ such that
$\left\| \eta(\hat{\bm{x}}_N) - \eta(\check{\bm{x}}_N) \right\|<\epsilon$ for any $N>N_0$.
Then it follows that
\begin{align}
  - \eta(\check{\bm{x}}_N) - \epsilon < - \eta(\hat{\bm{x}}_N) < - \eta(\check{\bm{x}}_N) + \epsilon.
\end{align}
Therefore, by using Weyl's inequality \cite{Bhatia1997}, we have
\begin{align}
  \Tr \left[ e^{- N \eta(\check{\bm{x}}_N) - N \epsilon} \right]
  < \Tr \left[ e^{- N \eta(\hat{\bm{x}}_N)} \right]
  < \Tr \left[ e^{- N \eta(\check{\bm{x}}_N) + N \epsilon} \right].
\end{align}
By definition of $\psi_N^{\dot{\eta}}$, this implies that
\begin{align}
  \psi_N^{\check{\eta}} - \epsilon < \psi_N^{\hat{\eta}} < \psi_N^{\check{\eta}} + \epsilon.
\end{align}

\section{Proof of Theorem~\ref{theorem:exp}} \label{sec:proof_theorem_exp}
To prove Theorem~\ref{theorem:exp}, we first prove the following lemma.
\begin{lemma} \label{lemma:AA_cor}
Let $\{f_n\}_{n \in \mathbb{N}}$ be a sequence of real-valued differentiable functions on $I$
such that $\{f_n(x_0)\}_{n \in \mathbb{N}}$ is bounded for some $x_0 \in I$
and $\{f_n'\}_{n \in \mathbb{N}}$ is uniformly bounded.
Then there exists a subsequence $\{f_{n_k}\}_{k \in \mathbb{N}}$ that converges uniformly on $I$.
\end{lemma}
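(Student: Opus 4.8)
The plan is to recognize this as a direct instance of the Arzel\`a--Ascoli theorem; essentially all the work is to extract its two hypotheses, equicontinuity and uniform boundedness, from the two assumptions in the statement.

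First I would establish equicontinuity. Let $M>0$ be a uniform bound on $\{f_n'\}$, so that $|f_n'(x)| \le M$ for every $n$ and every $x \in I$. By the mean value theorem, for any $x,y \in I$ there is a point $\xi$ between $x$ and $y$ with $f_n(x)-f_n(y) = f_n'(\xi)(x-y)$, hence $|f_n(x)-f_n(y)| \le M|x-y|$. Thus all the $f_n$ are $M$-Lipschitz with a common constant, which is uniform equicontinuity. Next I would establish uniform boundedness: setting $C \equiv \sup_n |f_n(x_0)| < \infty$ (finite by hypothesis) and letting $D$ denote the length of $I$, the Lipschitz bound gives $|f_n(x)| \le |f_n(x_0)| + |f_n(x)-f_n(x_0)| \le C + MD$ for every $x \in I$ and every $n$, so $\{f_n\}$ is uniformly bounded on $I$.

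Since $I$ is a closed bounded interval, hence a compact metric space, the Arzel\`a--Ascoli theorem now applies directly: an equicontinuous and uniformly bounded sequence of continuous functions on a compact metric space admits a subsequence converging uniformly. This produces the desired $\{f_{n_k}\}_{k\in\mathbb{N}}$, completing the proof. If a self-contained argument is preferred to citing Arzel\`a--Ascoli, the same conclusion follows by a diagonal extraction: enumerate a countable dense subset $\{q_1,q_2,\dots\}$ of $I$ (the rationals in $I$, say), use boundedness of $\{f_n(q_1)\}$ to pass to a subsequence convergent at $q_1$, then a further subsequence convergent at $q_2$, and so on, and take the diagonal subsequence, which converges at every $q_j$; equicontinuity then upgrades pointwise convergence on the dense set to uniform convergence on all of $I$ by the usual $\epsilon/3$ argument.

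I do not anticipate any real obstacle here. The only point worth a moment's care is that it is precisely the uniform bound on the derivatives that does double duty: it yields equicontinuity via the mean value theorem, and, combined with boundedness of $\{f_n(x_0)\}$ at the single point $x_0$, it promotes that one-point control to uniform boundedness of the functions themselves, so that Arzel\`a--Ascoli (or the diagonal argument) can be invoked.
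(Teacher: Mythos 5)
Your proof is correct and follows essentially the same route as the paper: the mean value theorem converts the uniform derivative bound into a common Lipschitz constant, which yields both equicontinuity and (together with boundedness at $x_0$) uniform boundedness, after which Arzel\`a--Ascoli applies. The sketched diagonal-extraction alternative is a standard unpacking of that theorem and adds nothing beyond what the paper's argument already provides.
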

\begin{proof}
By the Arzel\`{a}-Ascoli theorem \cite{Rudin1976},
it is sufficient to show
the uniform boundedness and equicontinuity
of $\{f_n\}_{n \in \mathbb{N}}$.

(uniform boundedness)
Uniform boundedness of $\{f_n'\}_{n \in \mathbb{N}}$ implies by the mean value theorem that for all $n \in \mathbb{N}$ and $x,y \in I$,
\begin{align}
  \left| f_n(x)-f_n(y) \right| \leq M \left| x-y \right|, \label{eq:AA_Cor_MeanValueThm}
\end{align}
where $\displaystyle M \equiv \sup_{n \in \mathbb{N}, x \in I}|f_n'(x)|$.
In addition, since $\{f_n(x_0)\}_{n \in \mathbb{N}}$ is bounded,
we have, for all $n \in \mathbb{N}$ and $x \in I = [a,b]$,
\begin{align}
  \left| f_n(x) \right| \leq F + M |b-a|,
\end{align}
where $\displaystyle F \equiv \sup_{n \in \mathbb{N}}|f_n(x_0)|$.
Therefore, $\{f_n\}_{n \in \mathbb{N}}$ is uniformly bounded.

(equicontinuity)
Given $\epsilon > 0$, let $\displaystyle \delta \equiv \frac{\epsilon}{M}$.
Then, using Eq.~\eqref{eq:AA_Cor_MeanValueThm},
we have
\begin{align}
  \left| x-y \right| < \delta
  \Rightarrow \left| f_n(x)-f_n(y) \right| < \epsilon
\end{align}
for all $x, y \in I$.
Therefore, $\{f_n\}_{n \in \mathbb{N}}$ is equicontinuous.
\end{proof}

\begin{proof}[Proof of Theorem~\ref{theorem:exp}]
Let us introduce the generating functions
\begin{align}
  \phi_N^{\dot{\eta},\dot{\theta}}(\lambda)
  &\equiv - \frac{1}{N} \log \Tr \left[ e^{-N\left(\eta(\dot{\bm{x}}_N)+\lambda\theta(\dot{\bm{x}}_N)\right)} \right].
\end{align}
Taking the derivative of $\phi_N^{\dot{\eta},\dot{\theta}}$,
we obtain the expectation value of $\theta(\dot{\bm{x}}_N)$ in $\rho_N^{\dot{\eta},\dot{\theta}}(\lambda)$ as
\begin{align}
  \frac{d\phi_N^{\dot{\eta},\dot{\theta}}}{d\lambda}(\lambda)
  &= \braket{\theta(\dot{\bm{x}}_N)}_N^{\dot{\eta},\dot{\theta}}(\lambda).
\end{align}
In particular, at $\lambda=0$, it coincides with the expectation value in $\rho_N^{\dot{\eta}}$.

Let $d$ be the dimension of the local Hilbert space on each site.
Then we have
\begin{align}
  d^N e^{-N\|\eta(\check{\bm{x}}_N)\|} \leq \Tr \left[ e^{-N\eta(\check{\bm{x}}_N)} \right] \leq d^N e^{+N\|\eta(\check{\bm{x}}_N)\|}.
\end{align}
Therefore, we observe that $\left\{ \phi_N^{\check{\eta},\check{\theta}} (0) \right\}_{N\in\mathbb{N}}$ is bounded.
In addition, since $\displaystyle\frac{d\phi_N^{\check{\eta},\check{\theta}}}{d\lambda}(\lambda)=\braket{\theta(\check{\bm{x}}_N)}_N^{\check{\eta},\check{\theta}}(\lambda)\leq\left\|\theta(\check{\bm{x}}_N)\right\|$,
$\displaystyle\left\{\frac{d\phi_N^{\check{\eta},\check{\theta}}}{d\lambda}\right\}_{N\in\mathbb{N}}$ is uniformly bounded on any closed and bounded interval.
Therefore, by Lemma~\ref{lemma:AA_cor},
there exists a subsequence $\displaystyle\left\{\phi_{N_k}^{\check{\eta},\check{\theta}}\right\}_{k\in\mathbb{N}}$
that converges (uniformly on compacts in $\mathbb{R}$).
For this limit, in the same way as in the proof of Theorem~\ref{theorem:psi},
it can be shown that
\begin{align}
  \phi^{\eta,\theta}(\lambda)
  \equiv \lim_{k\to\infty}\phi_{N_k}^{\hat{\eta},\hat{\theta}}(\lambda)
  = \lim_{k\to\infty}\phi_{N_k}^{\check{\eta},\check{\theta}}(\lambda)
\end{align}
for all $\lambda \in \mathbb{R}$.

Since $\theta(\check{\bm{x}}_N)$ commutes with $\eta(\check{\bm{x}}_N)$,
we have
\begin{align}
  \frac{d^2\phi_N^{\check{\eta},\check{\theta}}}{d\lambda^2}(\lambda)
  &= - N \Tr \left[ \left( \theta(\check{\bm{x}}_N) - \braket{\theta(\check{\bm{x}}_N)}_N^{\check{\eta},\check{\theta}}(\lambda) \right)^2 \rho_N^{\check{\eta},\check{\theta}}(\lambda) \right].
\end{align}
Thus, using assumption~\ref{cond:variance},
we find that $\displaystyle\left\{\frac{d^2\phi_N^{\check{\eta},\check{\theta}}}{d\lambda^2}\right\}_{N\in\mathbb{N}}$ is uniformly bounded on $I$.
Hence, by Lemma~\ref{lemma:AA_cor},
there exists a subsequence $\displaystyle\left\{\frac{d\phi_{N_{k_l}}^{\check{\eta},\check{\theta}}}{d\lambda}\right\}_{l\in\mathbb{N}}$ that converges uniformly on $I$.
Therefore, $\phi^{\eta,\theta}$ is continuously differentiable on $I$
and its derivative is given by
\begin{align}
  \frac{d\phi^{\eta,\theta}}{d\lambda}(\lambda)
  = \lim_{l\to\infty}\frac{d\phi_{N_{k_l}}^{\check{\eta},\check{\theta}}}{d\lambda}(\lambda)
  = \lim_{l\to\infty}\braket{\theta(\check{\bm{x}}_{N_{k_l}})}_{N_{k_l}}^{\check{\eta},\check{\theta}}(\lambda)
\end{align}
for all $\lambda \in I$ \cite{Rudin1976}.

Since $\phi_{N_k}^{\hat{\eta},\hat{\theta}}$ is concave,
using Griffiths's lemma\cite{Griffiths1964}, we obtain
\begin{align}
  \frac{d\phi^{\eta,\theta}}{d\lambda}(\lambda)
  = \lim_{k\to\infty}\frac{d\phi_{N_k}^{\hat{\eta},\hat{\theta}}}{d\lambda}(\lambda)
  = \lim_{k\to\infty}\braket{\theta(\hat{\bm{x}}_{N_k})}_{N_k}^{\hat{\eta},\hat{\theta}}(\lambda).
\end{align}
for all $\lambda \in I$.

From the above, together with assumption~\ref{cond:exp_converge},
we have
\begin{align}
  \lim_{N\to\infty}\braket{\theta(\check{\bm{x}}_N)}_N^{\check{\eta},\check{\theta}}(0)
  = \lim_{l\to\infty}\braket{\theta(\check{\bm{x}}_{N_{k_l}})}_{N_{k_l}}^{\check{\eta},\check{\theta}}(0)
  = \lim_{k\to\infty}\braket{\theta(\hat{\bm{x}}_{N_k})}_{N_k}^{\hat{\eta},\hat{\theta}}(0)
  = \lim_{N\to\infty}\braket{\theta(\hat{\bm{x}}_N)}_N^{\hat{\eta},\hat{\theta}}(0).
\end{align}
\end{proof}

\section{Proof of Properties~\ref{property:x_N^eta}-\ref{formula:tdforce}} \label{sec:properties_proof}
\begin{proof}[proof of Property~\ref{property:x_N^eta}]
Using Eq.~\eqref{eq:tr_f-rho_check} and letting $f(\bm{x})=x_i$, we get
\begin{align}
  \lim_{N\to\infty} \Braket{\check{x}_{\indind{N}{i}}}_N^{\check{\eta}}
  = x_{\indind{\mathrm{max}}{i}}^\eta.
\end{align}
We then apply Theorem~\ref{theorem:exp} with $\theta=x_{\ind{i}}$,
and thereby relate this to the expectation value of $\hat{x}_{\indind{N}{i}}$
in the SE for the noncommutative set $\hat{\bm{x}}_N$.
To do so, we check the assumptions of the theorem.
More specifically, we examine the variance of $\theta(\check{x}_{\indind{N}{i}})=\check{x}_{\indind{N}{i}}$ in the density matrix
\begin{align}
  \rho_N^{\check{\eta},\check{\theta}}(\lambda) \propto e^{-N(\eta(\check{\bm{x}}_N)+\lambda\theta(\check{\bm{x}}_N))},
\end{align}
which is constructed from the commutative set $\check{\bm{x}}_N$.

First, we show that there exists a bounded closed interval $I(\ni 0)$
such that for any $\lambda \in I$, $\eta+\lambda\theta$ satisfies condition~\ref{cond:B}.
Since we have chosen $\theta=x_{\ind{i}}$,
the Hesse matrix $H[\s-(\eta+\lambda\theta)](\bm{x})$ of $\s(\bm{x})-(\eta(\bm{x})+\lambda\theta(\bm{x}))$
is equal to the Hesse matrix $H[\s-\eta](\bm{x})$ of $\s(\bm{x})-\eta(\bm{x})$:
\begin{align}
  &H_{kl}[\s-(\eta+\lambda\theta)](\bm{x})
  = H_{kl}[\s-\eta](\bm{x}) \nonumber\\
  &= \frac{\partial^2 \s}{\partial x_{\ind{k}} \partial x_{\ind{l}}}(\bm{x})
  - \frac{\partial^2 \eta}{\partial x_{\ind{k}} \partial x_{\ind{l}}}(\bm{x})
\end{align}
Since $\eta$ satisfies condition~\ref{cond:B},
the Hesse matrix of $\s-\eta$ is negative definite
in a neighborhood of $\bm{x}_\mathrm{max}^\eta$.
Therefore,
the Hesse matrix of $\s(\bm{x})-(\eta(\bm{x})+\lambda\theta(\bm{x}))$ is also negative definite
in a neighborhood of $\bm{x}_\mathrm{max}^\eta$.
By the implicit function theorem, in a neighborhood of $\lambda=0$,
the maximum point $\bm{x}_\mathrm{max}^{\eta,\theta}(\lambda)$ of $\s(\bm{x})-(\eta(\bm{x})+\lambda\theta(\bm{x}))$
changes continuously from $\bm{x}_\mathrm{max}^\eta$ with respect to $\lambda$.
Therefore, taking $I(\ni 0)$ narrow enough, for any $\lambda \in I$,
the Hesse matrix of $\s(\bm{x})-(\eta(\bm{x})+\lambda\theta(\bm{x}))$
is negative definite in a neighborhood of $\bm{x}_\mathrm{max}^{\eta,\theta}(\lambda)$.
That is, for any $\lambda \in I$, $\eta+\lambda\theta$ satisfies condition~\ref{cond:B}.

Next, for $I$ taken as above,
we show that Eq.~\eqref{eq:variance_cond} of assumption~\ref{cond:variance} holds.
Let $H$ be a Hesse Matrix of $\s(\bm{x})-(\eta(\bm{x})+\lambda\theta(\bm{x}))$ at $\bm{x}_\mathrm{max}^{\eta,\theta}(\lambda)$.
Since $\eta+\lambda\theta$ also satisfies condition~\ref{cond:B},
in the same way as discussed in Section~\ref{sec:SE_noncommutative_check},
it can be seen that
the probability distribution of $\check{\bm{x}}_N$ in the density matrix $\rho_N^{\check{\eta},\check{\theta}}(\lambda)$ can be well approximated by the Gaussian distribution with the covariance matrix $- N^{-1} H^{-1}$.
Therefore, we obtain
\begin{align}
  &\Tr \left[ \left( \check{x}_{\indind{N}{i}} - \braket{\check{x}_{\indind{N}{i}}}_N^{\check{\eta},\check{\theta}}(\lambda) \right)^2 \rho_N^{\check{\eta},\check{\theta}}(\lambda) \right] \nonumber\\
  &= N^{-1} |(H^{-1})_{ii}|  + o(N^{-1}).
\end{align}
Thus, we see that indeed assumption~\ref{cond:variance} of Theorem~\ref{theorem:exp} holds.

From the above, we can apply Theorem~\ref{theorem:exp} and we obtain
\begin{align}
  \lim_{N\to\infty} \braket{\hat{x}_{\indind{N}{i}}}_N^{\hat{\eta}}
  = \lim_{N\to\infty} \braket{\check{x}_{\indind{N}{i}}}_N^{\check{\eta}}
  = x_{\indind{\mathrm{max}}{i}}^\eta. \label{eq:asymptotic-approx_x}
\end{align}
Therefore, we have Eq.~\eqref{eq:x_N^eta}.
\end{proof}

\begin{proof}[proof of Property~\ref{property:var_x_N^eta}]
Using Eq.~\eqref{eq:tr_f-rho_check} and letting $f(\bm{x})=(x_i)^2$, we get
\begin{align}
  \lim_{N\to\infty} \Braket{(\check{x}_{\indind{N}{i}})^2}_N^{\check{\eta}}
  = (x_{\indind{\mathrm{max}}{i}}^\eta)^2.
\end{align}
We then apply Theorem~\ref{theorem:exp} with $\theta=(x_{\ind{i}})^2$,
and thereby relate this to the expectation value of $(\hat{x}_{\indind{N}{i}})^2$
in the SE for the noncommutative set $\hat{\bm{x}}_N$.
To do so, we check the assumptions of the theorem.
More specifically, we examine the variance of $\theta(\check{x}_{\indind{N}{i}})=(\check{x}_{\indind{N}{i}})^2$ in the density matrix
\begin{align}
  \rho_N^{\check{\eta},\check{\theta}}(\lambda) \propto e^{-N(\eta(\check{\bm{x}}_N)+\lambda\theta(\check{\bm{x}}_N))},
\end{align}
which is constructed from the commutative set $\check{\bm{x}}_N$.

First, we show that there exists a bounded closed interval $I(\ni 0)$
such that for any $\lambda \in I$, $\eta+\lambda\theta$ satisfies condition~\ref{cond:B}.
Since we have chosen $\theta=(x_{\ind{i}})^2$,
the Hesse matrix $H[\s-(\eta+\lambda\theta)](\bm{x})$ of $\s(\bm{x})-(\eta(\bm{x})+\lambda\theta(\bm{x}))$
is related to the Hesse matrix $H[\s-\eta](\bm{x})$ of $\s(\bm{x})-\eta(\bm{x})$ as
\begin{align}
  H_{kl}[\s-(\eta+\lambda\theta)](\bm{x})
  = H_{kl}[\s-\eta](\bm{x}) - 2 \lambda \delta_{ik} \delta_{il}.
\end{align}
Thus, when $H[\s-\eta](\bm{x})$ is negative definite,
$H[\s-(\eta+\lambda\theta)](\bm{x})$ is also negative definite.
Since $\eta$ satisfies condition~\ref{cond:B},
the Hesse matrix of $\s-\eta$ is negative definite
in a neighborhood of $\bm{x}_\mathrm{max}^\eta$.
Therefore,
the Hesse matrix of $\s(\bm{x})-(\eta(\bm{x})+\lambda\theta(\bm{x}))$ is also negative definite
in a neighborhood of $\bm{x}_\mathrm{max}^\eta$.
By the implicit function theorem, in a neighborhood of $\lambda=0$,
the maximum point $\bm{x}_\mathrm{max}^{\eta,\theta}(\lambda)$ of $\s(\bm{x})-(\eta(\bm{x})+\lambda\theta(\bm{x}))$
changes continuously from $\bm{x}_\mathrm{max}^\eta$ with respect to $\lambda$.
Therefore, taking $I=[0,\delta]$ for some positive constant $\delta$, for any $\lambda \in I$,
the Hesse matrix of $\s(\bm{x})-(\eta(\bm{x})+\lambda\theta(\bm{x}))$
is negative definite in a neighborhood of $\bm{x}_\mathrm{max}^{\eta,\theta}(\lambda)$.
That is, for any $\lambda \in I$, $\eta+\lambda\theta$ satisfies condition~\ref{cond:B}.

Next, for $I$ taken as above,
we show that Eq.~\eqref{eq:variance_cond} of assumption~\ref{cond:variance} holds.
Let $H$ be a Hesse Matrix of $\s(\bm{x})-(\eta(\bm{x})+\lambda\theta(\bm{x}))$ at $\bm{x}_\mathrm{max}^{\eta,\theta}(\lambda)$.
Since $\eta+\lambda\theta$ also satisfies condition~\ref{cond:B},
in the same way as discussed in Section~\ref{sec:SE_noncommutative_check},
it can be seen that
the probability distribution of $\check{\bm{x}}_N$ in the density matrix $\rho_N^{\check{\eta},\check{\theta}}(\lambda)$ can be well approximated by the Gaussian distribution with the covariance matrix $- N^{-1} H^{-1}$.
Therefore, denoting $\delta\check{x}_{\indind{N}{i}}(\lambda) \equiv \check{x}_{\indind{N}{i}} - \braket{\check{x}_{\indind{N}{i}}}_N^{\check{\eta},\check{\theta}}(\lambda)$, we obtain
\begin{align}
  &\Tr \left[ \left( (\check{x}_{\indind{N}{i}})^2 - \braket{(\check{x}_{\indind{N}{i}})^2}_N^{\check{\eta},\check{\theta}}(\lambda) \right)^2 \rho_N^{\check{\eta},\check{\theta}}(\lambda) \right] \nonumber\\
  &= 4 \left(\braket{\check{x}_{\indind{N}{i}}}_N^{\check{\eta},\check{\theta}}(\lambda)\right)^2 \braket{(\delta\check{x}_{\indind{N}{i}}(\lambda))^2}_N^{\check{\eta},\check{\theta}}(\lambda) \nonumber\\
  &+ 4 \braket{\check{x}_{\indind{N}{i}}}_N^{\check{\eta},\check{\theta}}(\lambda) \braket{(\delta\check{x}_{\indind{N}{i}}(\lambda))^3}_N^{\check{\eta},\check{\theta}}(\lambda) \nonumber\\
  &+ \braket{(\delta\check{x}_{\indind{N}{i}}(\lambda))^4}_N^{\check{\eta},\check{\theta}}(\lambda)
  - \left(\braket{(\delta\check{x}_{\indind{N}{i}}(\lambda))^2}_N^{\check{\eta},\check{\theta}}(\lambda)\right)^2 \nonumber\\
  &= 4 N^{-1} \left(\braket{\check{x}_{\indind{N}{i}}}_N^{\check{\eta},\check{\theta}}(\lambda)\right)^2 |(H^{-1})_{ii}| + o(N^{-1}).
\end{align}
Thus, we see that indeed assumption~\ref{cond:variance} of Theorem~\ref{theorem:exp} holds.

From the above, we can apply Theorem~\ref{theorem:exp} and we obtain
\begin{align}
  \lim_{N\to\infty} \braket{(\hat{x}_{\indind{N}{i}})^2}_N^{\hat{\eta}}
  = \lim_{N\to\infty} \braket{(\check{x}_{\indind{N}{i}})^2}_N^{\check{\eta}}
  = (x_{\indind{\mathrm{max}}{i}}^\eta)^2.
\end{align}
Therefore, together with Eq.~\eqref{eq:asymptotic-approx_x}, we have Eq.~\eqref{eq:var_x_N^eta}.
\end{proof}

\begin{proof}[proof of Property~\ref{formula:entropy}]
By using Theorem~\ref{theorem:psi} and Eq.~\eqref{eq:psi_check}, in the thermodynamic limit, we have
\begin{align}
  \lim_{N\to\infty} \psi_N^{\hat{\eta}}
  = \lim_{N\to\infty} \psi_N^{\check{\eta}}
  &= \eta (\bm{x}_\mathrm{max}^\eta) - \s(\bm{x}_\mathrm{max}^\eta).
\end{align}
This, together with Eq.~\eqref{eq:asymptotic-approx_x}, implies
\begin{align}
  \eta(\bm{x}_N^{\hat{\eta}}) - \psi_N^{\hat{\eta}}
  = \s(\bm{x}_\mathrm{max}^\eta) + o(N^0).
\end{align}
Therefore, we have Eq.~\eqref{eq:entropy-formula}.
\end{proof}

\begin{proof}[proof of Property~\ref{formula:tdforce}]
By using Eq.~\eqref{eq:x_max^eta},
which is also true for the SE for $\check{\bm{x}}_N$ as described in Section~\ref{sec:SE_noncommutative_check},
we have
\begin{align}
  \Pi_{\ind{i}} (\bm{x}_\mathrm{max}^\eta)
  = \frac{\partial \s}{\partial x_{\ind{i}}} (\bm{x}_\mathrm{max}^\eta)
  = \frac{\partial \eta}{\partial x_{\ind{i}}} (\bm{x}_\mathrm{max}^\eta).
\end{align}
This, together with Eq.~\eqref{eq:asymptotic-approx_x}, implies
\begin{align}
  \frac{\partial\eta}{\partial x_{\ind{i}}}(\bm{x}_N^{\hat{\eta}})
  = \Pi_{\ind{i}}(\bm{x}_\mathrm{max}^\eta) + o(N^0).
\end{align}
Therefore, we have Eq.~\eqref{eq:tdforce-formula}.
\end{proof}

\section{Proof of Property~\ref{property:psi-s}} \label{sec:genLeg_proof}
Using Eq.~\eqref{eq:s-psi}, we have
\begin{align}
  \s(\bm{x})
  &\leq \eta(\bm{\kappa};\bm{x}) - \inf_{\bm{x}' \in \Omega} \left\{ \eta(\bm{\kappa};\bm{x}')-\s(\bm{x}') \right\} \nonumber\\
  &= \eta(\bm{\kappa};\bm{x}) - \psi^\eta(\bm{\kappa})
\end{align}
for all $\bm{\kappa} \in K$.
Then we have
\begin{align}
   \s(\bm{x}) \leq \inf_{\bm{\kappa} \in K} \left\{ \eta(\bm{\kappa};\bm{x})-\psi^\eta(\bm{\kappa}) \right\}.
\end{align}
On the other hand, using Eq.~\eqref{eq:s-psi}, we have
\begin{align}
  &\inf_{\bm{\kappa} \in K} \left\{ \eta(\bm{\kappa};\bm{x})-\psi^\eta(\bm{\kappa}) \right\} \nonumber\\
  &\leq \eta(\bm{\kappa}';\bm{x})-\psi^\eta(\bm{\kappa}') \nonumber\\
  &= \eta(\bm{\kappa}';\bm{x}) - \inf_{\bm{x}' \in \Omega} \left\{ \eta(\bm{\kappa}';\bm{x}')-\s(\bm{x}') \right\} \nonumber\\
  &= \eta(\bm{\kappa}';\bm{x}) - \eta(\bm{\kappa}'; \bm{x}_\mathrm{max}^\eta(\bm{\kappa}')) + \s(\bm{x}_\mathrm{max}^\eta(\bm{\kappa}'))
\end{align}
for all $\bm{\kappa}' \in K$.
The condition~\ref{cond:D} implies that there exists $\bm{\kappa}' \in K$
such that $\bm{x}_\mathrm{max}^\eta(\bm{\kappa}')=\bm{x}$,
then it holds
\begin{align}
  \inf_{\bm{\kappa} \in K} \left\{ \eta(\bm{\kappa};\bm{x})-\psi^\eta(\bm{\kappa}) \right\} &\leq \s(\bm{x}).
\end{align}
Therefore, we have Eq.~\eqref{eq:psi-s}.

\section{information-theoretic interpretation of Eq.~\eqref{eq:tdforce-formula}} \label{sec:derivation_tdforce}
In this section,
we give another derivation
of the formula for the intensive parameter,
Eq.~\eqref{eq:tdforce-formula}.
A great advantage of this proof is
that it works even in the case
where $\hat{\bm{x}}_N$ do not commute with each other
because it does not rely on the asymptotic approximation.

Suppose that the SE is outside the phase transition region,
and the {\CE}
\begin{align}
  \indexCE(\bm{\pi};\bm{x}) = \bm{\pi}\cdot\bm{x}
\end{align}
describes the same equilibrium state as the SE
in the thermodynamic limit
when $\bm{\pi}=\tilde{\bm{\pi}}$.
In particular, the following holds:
\begin{align}
  \lim_{N\to\infty} \bm{x}_N^\eta
  = \lim_{N\to\infty} \bm{x}_N^\indexCE(\tilde{\bm{\pi}}). \label{eq:tdforce-woLaplace_0}
\end{align}
Since $\displaystyle \lim_{N\to\infty} \bm{x}_N^\indexCE(\tilde{\bm{\pi}})$
is assumed to be outside the phase transition region,
the canonical thermodynamic function $\psi^\indexCE$ is differentiable at $\tilde{\bm{\pi}}$,
and its derivative at $\tilde{\bm{\pi}}$ is given by \cite{Griffiths1964}
\begin{align}
  \frac{\partial \psi^\indexCE}{\partial \pi_{i}}(\tilde{\bm{\pi}})
  =  \lim_{N\to\infty} \bm{x}_N^\indexCE(\tilde{\bm{\pi}}).
\end{align}
Therefore, using Eq.~\eqref{eq:tdforce-woLaplace_0},
Eq.~\eqref{eq:tdforce-formula} can be rephrased as
\begin{align}
  \lim_{N\to\infty} \left( \tilde{\pi}_{\ind{j}} - \frac{\partial\eta}{\partial x_{\ind{j}}}\left(\bm{x}_N^\indexCE(\tilde{\bm{\pi}})\right) \right) = 0. \label{eq:tdforce-woLaplace}
\end{align}
We derive Eq.~\eqref{eq:tdforce-woLaplace}
under reasonable conditions.

\subsection{Assumptions}
Let us measure the distance between the density matrices by the relative entropy
\begin{align}
  \infdiv*{\hat{\rho}}{\hat{\sigma}}
  \equiv \Tr \left[ \hat{\rho} \left\{ \log \hat{\rho} - \log \hat{\sigma} \right\} \right].
\end{align}
Since the ``distance'' between $\hat{\rho}_N^\eta$ and $\hat{\rho}_N^\indexCE(\bm{\pi})$ is expected to by minimized at $\bm{\pi}=\tilde{\bm{\pi}}$,
it is reasonable to assume that
\begin{enumerate}[label={(\roman*)}]
  \item \hfill\vspace{-\abovedisplayskip}\vspace{-\baselineskip}
    \begin{align}
    \left.\frac{\partial\infdiv*{\hat{\rho}_N^\indexCE(\bm{\pi})}{\hat{\rho}_N^\eta}}{\partial\pi_{\ind{i}}}\right|_{\bm{\pi}=\tilde{\bm{\pi}}} = o(N). \label{eq:tdforce-woLaplace_1}
  \end{align}
  \label{cond:tdforce-woLaplace_1}
\end{enumerate}
Note that a similar condition
\begin{align}
  \left.\frac{\partial\infdiv*{\hat{\rho}_N^\eta}{\hat{\rho}_N^\indexCE(\bm{\pi})}}{\partial\pi_{\ind{i}}}\right|_{\bm{\pi}=\tilde{\bm{\pi}}} = o(N)
\end{align}
is equivalent to Eq.~\eqref{eq:tdforce-woLaplace_0},
which is a necessary condition for $\hat{\rho}_N^\indexCE(\tilde{\bm{\pi}})$ and $\hat{\rho}_N^\eta$ to describe the same equilibrium state.
Therefore, it seems reasonable to assume Eq.~\eqref{eq:tdforce-woLaplace_1}.

In addition, we assume
\begin{enumerate}[label={(\roman*)}]
  \setcounter{enumi}{1}
  \item $\hat{\rho}_N^\indexCE(\tilde{\bm{\pi}})$ has normal fluctuations \cite{Jaksic2010,Goderis1989}:
    \begin{align}
      \braket{\delta\hat{x}_{N(i_1)} \cdots \delta\hat{x}_{N(i_{2n})}}_N^\indexCE(\tilde{\bm{\pi}})
      &= O(N^{-n}),
    \end{align}
    \label{cond:tdforce-woLaplace_2}
\end{enumerate}
where $\braket{\bullet}_N^\eta \equiv \Tr \left[ \bullet \rho_N^\eta \right]$.
Here, for simplicity of notation,
we abbreviate
$\braket{\cdots (\bullet-\braket{\bullet}_N^\eta) \cdots}_N^\eta$
to $\braket{\cdots \delta\bullet \cdots}_N^\eta$.

Finally, we assume
\begin{enumerate}[label={(\roman*)}]
  \setcounter{enumi}{2}
  \item the $m \times m$ positive-semidefinite matrix
    \begin{align}
      \Delta_{ij} \equiv \lim_{N\to\infty} N \Braket{\delta\hat{x}_{\indind{N}{i}};\delta\hat{x}_{\indind{N}{j}}}_N^\indexCE(\tilde{\bm{\pi}}) \label{eq:tdforce-woLaplace_3}
    \end{align}
    is invertible,
    \label{cond:tdforce-woLaplace_3}
\end{enumerate}
where $\braket{\bullet;\bullet}_N^\eta$ denotes
a generalization of the Bogoliubov-Duhamel inner product \cite{Petz1993}:
\begin{align}
  \braket{\hat{A};\hat{B}}_N^\eta
  &\equiv \int_0^1 d\nu\
  \braket{
    e^{+ \nu N \eta(\hat{\bm{x}}_N)} \hat{A}^\dagger
    e^{- \nu N \eta(\hat{\bm{x}}_N)} \hat{B}
}_N^\eta.
\end{align}
The validity of this assumption is explained as follows.
Since $\bm{x}^\indexCE(\tilde{\bm{\pi}})$ is assumed to be outside the phase transition region,
$\bm{x}^\indexCE(\bm{\pi})$ is analytic and one-to-one in a neighborhood of $\tilde{\bm{\pi}}$.
Therefore, $\displaystyle \frac{\partial x_{\ind{j}}^\indexCE}{\partial \pi_{\ind{i}}}(\tilde{\bm{\pi}})$ is an invertible matrix.
On the other hand, we have
\begin{align}
  \Delta_{ij} = \lim_{N\to\infty} \frac{\partial x_{\indind{N}{j}}^\indexCE}{\partial \pi_{\ind{i}}}(\tilde{\bm{\pi}}).
\end{align}
It is expected that we can interchange the limit and differentiation except at phase transition points (see Lemma~\ref{lemma:AA_cor}) as
\begin{align}
  \Delta_{ij} = \lim_{N\to\infty} \frac{\partial x_{\indind{N}{j}}^\indexCE}{\partial \pi_{\ind{i}}}(\tilde{\bm{\pi}})
  = \frac{\partial x_{\ind{j}}^\indexCE}{\partial \pi_{\ind{i}}}(\tilde{\bm{\pi}}).
\end{align}
Therefore, $\Delta_{ij}$ should be invertible.

Note that assumptions~\ref{cond:tdforce-woLaplace_2}-\ref{cond:tdforce-woLaplace_3}
are assumptions on the {\CE}, not on the SE.

\subsection{Proof}
To prove Eq.~\eqref{eq:tdforce-woLaplace},
we first prove the following lemma.

\begin{lemma} \label{lemma:bound_Bogoliubov-Duhamel}
If
\begin{align}
  \braket{\delta\hat{x}_{N(i_1)} \cdots \delta\hat{x}_{N(i_{2n})}}_N^\eta = O(N^{-n}),
\end{align}
then
\begin{align}
  \braket{
    \delta\hat{x}_{N(i_1)} \cdots \delta\hat{x}_{N(i_m)};
    \delta\hat{x}_{N(j_1)} \cdots \delta\hat{x}_{N(j_n)}
  }_N^\eta
  = O(N^{-\frac{m+n}{2}}). \label{eq:bound_Bogoliubov-Duhamel}
\end{align}
\end{lemma}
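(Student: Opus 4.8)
The plan is to reduce the statement to two standard facts about the Bogoliubov--Duhamel inner product $\braket{\cdot;\cdot}_N^\eta$: that it is a Hermitian, positive-semidefinite sesquilinear form (so that the Cauchy--Schwarz inequality applies to it), and that it is dominated by the symmetrized two-point function, $\braket{\hat A;\hat A}_N^\eta \le \tfrac12\braket{\hat A^\dagger\hat A+\hat A\hat A^\dagger}_N^\eta$. Both are obtained by diagonalizing $\eta(\hat{\bm x}_N)$, which is self-adjoint by condition~\ref{cond:A}. Writing $\eta(\hat{\bm x}_N)\ket{a}=\varepsilon_a\ket{a}$ and inserting resolutions of the identity in the definition, one gets
\begin{align}
  \braket{\hat A;\hat B}_N^\eta
  = \frac{1}{\mathrm{Tr}\,e^{-N\eta(\hat{\bm x}_N)}}\sum_{a,b}\braket{a|\hat A^\dagger|b}\braket{b|\hat B|a}\,L\!\left(e^{-N\varepsilon_a},e^{-N\varepsilon_b}\right),
\end{align}
where $L(u,v)=(u-v)/(\log u-\log v)$ is the logarithmic mean (with $L(u,u)=u$). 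Since $L>0$ and $L$ is symmetric, the form is Hermitian and positive semidefinite; and since the logarithmic mean never exceeds the arithmetic mean, $L(u,v)\le\tfrac12(u+v)$, the domination bound follows. No $N$-dependence enters at this stage.

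First I would combine the two facts: Cauchy--Schwarz together with the domination bound gives
\begin{align}
  \left|\braket{\hat A;\hat B}_N^\eta\right|^2
  \le \braket{\hat A;\hat A}_N^\eta\braket{\hat B;\hat B}_N^\eta
  \le \tfrac14\braket{\hat A^\dagger\hat A+\hat A\hat A^\dagger}_N^\eta\braket{\hat B^\dagger\hat B+\hat B\hat B^\dagger}_N^\eta .
\end{align}
Then I would specialize to $\hat A=\delta\hat x_{N(i_1)}\cdots\delta\hat x_{N(i_m)}$ and $\hat B=\delta\hat x_{N(j_1)}\cdots\delta\hat x_{N(j_n)}$. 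Because each $\delta\hat x_{N(i)}$ is self-adjoint, $\hat A^\dagger\hat A$ and $\hat A\hat A^\dagger$ are each ordered products of exactly $2m$ factors $\delta\hat x_{N(\cdot)}$, and $\hat B^\dagger\hat B$, $\hat B\hat B^\dagger$ are products of $2n$ such factors. Applying the hypothesis of the lemma to the corresponding index tuples yields $\braket{\hat A^\dagger\hat A}_N^\eta=\braket{\hat A\hat A^\dagger}_N^\eta=O(N^{-m})$ and $\braket{\hat B^\dagger\hat B}_N^\eta=\braket{\hat B\hat B^\dagger}_N^\eta=O(N^{-n})$. Substituting back gives $\left|\braket{\hat A;\hat B}_N^\eta\right|^2=O(N^{-(m+n)})$, which is Eq.~\eqref{eq:bound_Bogoliubov-Duhamel}.

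The hard part will be nothing more than carefully establishing the first step, i.e.\ the eigenbasis representation and the domination inequality $\braket{\hat A;\hat A}_N^\eta\le\tfrac12\braket{\hat A^\dagger\hat A+\hat A\hat A^\dagger}_N^\eta$. This is the classical Bogoliubov--Duhamel convexity bound, but I would spell it out here since $\eta$ need not be a polynomial and $\hat{\bm x}_N$ need not commute --- the argument uses only self-adjointness of $\eta(\hat{\bm x}_N)$ and the log-mean $\le$ arithmetic-mean inequality, both of which are insensitive to these issues. Everything after that is bookkeeping; note in particular that no reordering of central moments is required, since the lemma's hypothesis is assumed for arbitrary index tuples and therefore already covers the specific orderings that arise in $\hat A^\dagger\hat A$, $\hat A\hat A^\dagger$, $\hat B^\dagger\hat B$, and $\hat B\hat B^\dagger$.
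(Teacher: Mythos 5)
Your proposal is correct and follows essentially the same route as the paper: Cauchy--Schwarz for the Bogoliubov--Duhamel inner product, followed by the domination bound $\braket{\hat A;\hat A}_N^\eta\le\tfrac12\braket{\{\hat A^\dagger,\hat A\}}_N^\eta$ (which the paper invokes as the Brooks Harris inequality with a citation, where you supply the logarithmic-mean derivation explicitly), and then the same bookkeeping applying the moment hypothesis to the $2m$- and $2n$-fold products. The only difference is that you prove the cited inequality rather than quoting it, which is harmless extra detail.
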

\begin{proof}
Applying the Cauchy-Schwarz inequality
\begin{align}
  \left| \braket{\hat{A};\hat{B}}_N^\eta \right|^2
  &\leq \braket{\hat{A};\hat{A}}_N^\eta \braket{\hat{B};\hat{B}}_N^\eta,
\end{align}
and Brooks Harris inequality \cite{Mermin1966,BrooksHarris1967}
\begin{align}
  \braket{\hat{A};\hat{A}}_N^\eta
  \leq \frac{1}{2} \braket{\{\hat{A}^\dagger,\hat{A}\}}_N^\eta,
\end{align}
we have
\begin{align}
  &\left| \braket{
    \delta\hat{x}_{N(i_1)} \cdots \delta\hat{x}_{N(i_m)};
    \delta\hat{x}_{N(j_1)} \cdots \delta\hat{x}_{N(j_n)}
  }_N^\eta \right|^2 \nonumber\\
  &\leq \frac{1}{4} \Braket{ \left\{
    \delta\hat{x}_{N(i_m)} \cdots \delta\hat{x}_{N(i_1)},
    \delta\hat{x}_{N(i_1)} \cdots \delta\hat{x}_{N(i_m)}
  \right\} }_N^\eta
  \Braket{ \left\{
    \delta\hat{x}_{N(j_n)} \cdots \delta\hat{x}_{N(j_1)},
    \delta\hat{x}_{N(j_1)} \cdots \delta\hat{x}_{N(j_n)}
  \right\} }_N^\eta.
\end{align}
Therefore, by assumption,
we obtain Eq.~\eqref{eq:bound_Bogoliubov-Duhamel}.
\end{proof}

\begin{proof}[Proof of Eq.~\eqref{eq:tdforce-woLaplace}]
By assumption~\ref{cond:tdforce-woLaplace_2}
and Lemma~\ref{lemma:bound_Bogoliubov-Duhamel},
we have
\begin{align}
  \braket{
    \delta\hat{x}_{\indind{N}{i}};
    \delta\hat{x}_{N(j_1)} \cdots \delta\hat{x}_{N(j_n)}
  }_N^\indexCE
  = O(N^{-\frac{1+n}{2}}).
\end{align}
In addition, since $\eta$ is a polynomial,
expanding in power series around $\bm{x}_N^\indexCE(\bm{\pi})$,
$\eta(\hat{\bm{x}}_N)$ can be written as a finite sum as
\begin{align}
  \eta(\hat{\bm{x}}_N)
  &= \eta(\bm{x}_N^\indexCE(\bm{\pi}))
  + \sum_j \frac{\partial\eta}{\partial x_{\ind{j}}}(\bm{x}_N^\indexCE(\bm{\pi})) \delta\hat{x}_{\indind{N}{j}}
  + \cdots.
\end{align}
Therefore, using
\begin{align}
  \frac{\partial\infdiv*{\hat{\rho}_N^\indexCE(\bm{\pi})}{\hat{\rho}_N^\eta}}{\partial\pi_{\ind{i}}}
  &= N^2 \Braket{
    \delta \left[ \partial_{\pi_{\ind{i}}}\indexCE(\bm{\pi};\hat{\bm{x}}_N) \right];
    \indexCE(\bm{\pi};\hat{\bm{x}}_N) - \eta(\hat{\bm{x}}_N)
  }_N^\indexCE(\bm{\pi}),
\end{align}
we have
\begin{align}
  \frac{\partial\infdiv*{\hat{\rho}_N^\indexCE(\bm{\pi})}{\hat{\rho}_N^\eta}}{\partial\pi_{\ind{i}}}
  &= \sum_j N^2 \left\{ \Braket{
    \delta\hat{x}_{\indind{N}{i}};
    \delta\hat{x}_{\indind{N}{j}}
  }_N^\indexCE(\bm{\pi}) \left( \pi_{\ind{j}} - \frac{\partial\eta}{\partial x_{\ind{j}}}(\bm{x}_N^\indexCE(\bm{\pi})) \right)
  + O(N^{-\frac{1+2}{2}}) \right\}.
\end{align}
Therefore, assumption~\ref{cond:tdforce-woLaplace_1} can be rewritten as
\begin{align}
  \sum_j \Delta_{ij} \lim_{N\to\infty} \left( \tilde{\pi}_{\ind{j}} - \frac{\partial\eta}{\partial x_{\ind{j}}}\left(\bm{x}_N^\indexCE(\tilde{\bm{\pi}})\right) \right)
  = 0.
\end{align}
Hence, by assumption~\ref{cond:tdforce-woLaplace_3},
we obtain Eq.~\eqref{eq:tdforce-woLaplace}.
\end{proof}

\section{Numerical implementation of the SE associated with $\eta$ in Eq.~\eqref{eq:d2qIsing_eta}} \label{sec:d2qIsing_method}
We calculate statistical-mechanical quantities in the SE associated with $\eta$ in Eq.~\eqref{eq:d2qIsing_eta}
using the METTS algorithm.
To obtain the METTS,
one needs to apply $e^{-\frac{1}{2}N\eta(\bm{\kappa};\hat{\bm{x}}_N)}$
to a random product state.
Choosing $\eta$ as Eq.~\eqref{eq:d2qIsing_eta},
this can be easily done as follows.

Consider the general case
where $\hat{X}_{\indind{N}{0}}$ and $\hat{X}_{\indind{N}{1}}$ can be written
as the sum of local observables:
\begin{align}
  \hat{X}_{\indind{N}{0}} &= \sum_{i=1}^N \hat{h}_i,\\
  \hat{X}_{\indind{N}{1}} &= \sum_{i=1}^N \hat{m}_i,
\end{align}
where $\hat{h}_i$ and $\hat{m}_i$ are observables located around site $i$.
Then it follows that
\begin{align}
  - \frac{1}{2} N \eta(\bm{\kappa};\hat{\bm{x}}_N)
  = \frac{1}{\delta} \left\{
    \sum_{i=1}^N \delta \hat{\gamma}_i
    + N\nu\delta \log \left( \sum_{i=1}^N \hat{m}_i + N \right)
    + \sum_{i=1}^N \delta \hat{\gamma}_i
  \right\}
  + \mathrm{const.},
\end{align}
where $\hat{\gamma}_i \equiv - \frac{1}{4} (\kappa_{\ind{0}}\hat{h}_i + \kappa_{\ind{1}}\hat{m}_i)$.
Therefore, by using the Trotter-Suzuki formula \cite{Suzuki1976},
$e^{-\frac{1}{2}N\eta(\bm{\kappa};\hat{\bm{x}}_N)}$ is decomposed
into a product of local operators as
\begin{align}
  e^{-\frac{1}{2}N\eta(\bm{\kappa};\hat{\bm{x}}_N)}
  \propto \left\{
    e^{\delta \hat{\gamma}_1} \cdots e^{\delta \hat{\gamma}_N}
    \left( \sum_{i=1}^N \hat{m}_i + N \right)^{N\nu\delta}
    e^{\delta \hat{\gamma}_N} \cdots e^{\delta \hat{\gamma}_1}
  \right\}^{1/\delta}
  + O(\delta^2). \label{eq:METTS_Trotter}
\end{align}
Especially when $N\nu\delta$ and $1/\delta$ are both integers,
the right hand side of Eq.~\eqref{eq:METTS_Trotter} can be obtained
by simply multiplying the local operators repeatedly.
In particular, by choosing $\nu$ so that $N\nu$ is an integer with many divisors,
it is possible to extrapolate the statistical-mechanical quantities to $\delta \to 0$
while taking $N\nu\delta$ as an integer,
as shown in Fig.~\ref{fig:METTS_Trotter}.
Hence, the SE can be numerically constructed easily even for quantum systems.

\begin{figure}
  \centering
  \includegraphics[keepaspectratio, width=0.5\linewidth]{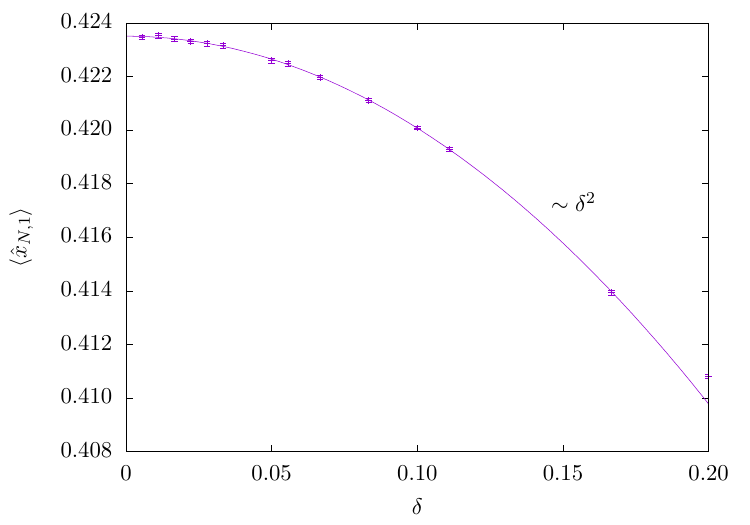}
  \caption{Trotter step dependence of the statistical-mechanical quantity
  for the one-dimensional transverse field Ising model
  ($\hat{X}_{\indind{N}{0}} = - J \sum \hat{\sigma}_{i}^z \hat{\sigma}_{i+1}^z - g \sum \hat{\sigma}_i^x,
  \hat{X}_{\indind{N}{1}} = \sum \hat{\sigma}_i^z$)
   in the SE for $N=50, J=1, g=1$ and $\nu=3.6, \bm{\kappa}=(1,5)$.}
  \label{fig:METTS_Trotter}
\end{figure}

\twocolumngrid
\bibliography{document}
\end{document}